\declaretheorem[name=Proposition,numberwithin=section]{proposition}
\begin{document}

\title{Early warning signals of non-critical transitions from linearised time-varying dynamics with applications to epidemic systems}

\author{Joshua Looker}
\affiliation{EPSRC \& MRC Centre for Doctoral Training in Mathematics for Real-World Systems, University of Warwick, Coventry, United Kingdom}
\author{Kat S. Rock}
\affiliation{The Zeeman Institute for Systems Biology \& Infectious Disease Epidemiology Research, and Mathematics Institute, University of Warwick, Coventry, United Kingdom}
\author{Louise Dyson}
\affiliation{The Zeeman Institute for Systems Biology \& Infectious Disease Epidemiology Research, and Mathematics Institute, University of Warwick, Coventry, United Kingdom}
\affiliation{School of Life Sciences, University of Warwick, Coventry, United Kingdom}

\begin{abstract}
In the wake of the SARS-CoV-2 pandemic, there has been heightened interest from applied mathematicians in infectious disease modelling. Modelling efforts often focus on predicting whether diseases are likely to be eliminated or, instead, (re-)emerge, especially as a result of control measures. This tipping point between elimination and infection waves has been successfully anticipated in the literature through the use of early warning signals and such signals often rely on the theory of critical slowing down. Recent developments have shown that these signals (increases in fluctuation variance and return time) can emerge from the system geometry in the case of non-normal dynamics rather than a change in asymptotic stability. We show how such dynamical behaviour occurs in the fluctuations from the mean-field in general stochastic systems. Using the susceptible-infectious-recovered model as an example application, we analyse how critical-like behaviour can be exploited to anticipate infection waves in the absence of an equilibrium bifurcation.
\end{abstract}

\maketitle

\section{Introduction}
The theory of early warning signals (EWS) has shown increasing promise in anticipating bifurcations in infectious disease models. In the literature, analysis generally considers the transcritical bifurcation between disease-free and endemic states as the basic reproduction number ($R_0$) is varied \cite{brett_anticipating_2017,brett_anticipating_2018,Looker_identifying_2025,southall_early_2021,southall_prospects_2020,gama_dessavre_problem_2019,ghadami_anticipating_2023,Drake_statistics_2019}. Near this bifurcation point, relaxation rates of stochastic perturbations slow down (known as critical slowing down, CSD), leading to identifiable trends in time series statistics, such as an increase in variance \cite{kuehn_mathematical_2011}. However, growing research for other dynamical systems has revealed that similar statistical patterns can occur in systems that are linearly stable and far from any bifurcation \cite{Farrell_stability_1996a,Farrell_stability_1996b,Troude_pseudo_2025,ORegan_transient_2020}. These effects arise from the geometry of the underlying dynamics and transient behaviour, rather than from changes in asymptotic stability. For epidemiological models, there are also recent results in the literature highlighting that these transient effects can also be used to identify points of interest in the time series of reported cases (both in terms of the standard bifurcation analysis on approach to (re-)emergence or elimination, and peaks and troughs in the incidence) \cite{proverbio_performance_2022,dablander_overlapping_2022,kaur_anticipating_2020,Looker_identifying_2025}.

A key concept underlying these phenomena is non-normality \cite{Trefethen_spectra_2005}. A linear operator is non-normal when it does not commute with its hermitian transpose, meaning its eigenvectors are not orthogonal. In such systems, perturbations can interact through the non-orthogonal eigenbasis, leading to transient amplification even when all eigenvalues indicate asymptotic decay. The short-term growth of perturbations, often referred to as transient amplification or `pseudo-bifurcation' behaviour, is driven by the geometric structure of the operator rather than by its spectrum. This is well known in fluid mechanics, where non-normal operators are associated with transient energy bursts and amplified responses to small perturbations, originally formalised by Farrell as generalised stability theory \cite{Farrell_stability_1996a,Farrell_stability_1996b}. More recently, these ideas have been formalised in the context of stochastic dynamical systems, showing that apparent EWS then occur due to sensitivity to the transient geometry of the system flow. Epidemic models are strongly non-normal due to the asymmetric coupling between susceptible and infected compartments. Further, competing time scales related to the rate of infection, recovery period and demography can lead to shearing of the phase-space. Real-world epidemic systems also exhibit stochasticity; this interacts with the transient dynamics and changing system geometry and can cause resonance of the existing eigenmodes in the system. Such stochastic resonance has already been identified in fluctuations around the endemic and disease-free equilibria, leading to a rich variety of dynamical behaviour \cite{Rozhnova_lag_2012,Rozhnova_cities_2011,Rozhnova_seasonal_2010,Rozhnova_oscillations_2009,Drake_statistics_2019,oregan_theory_2013,southall_early_2021}.

Non-normality can be examined through the properties of both the Jacobian ($J$) of the system and its symmetric part ($H=(J+J^T)/2$) \cite{Trefethen_spectra_2005}. While the Jacobian determines the asymptotic stability through its eigenvalues, the symmetric part encodes how the norm of perturbations grows or decays instantaneously. Positive eigenvalues of the symmetric part correspond to directions in which perturbation energy increases, even if all eigenvalues of the full Jacobian have negative real parts. This separation between asymptotic stability and instantaneous amplification is fundamental to understanding non-normal transient dynamics. In the context of epidemic models, examining the symmetric part of the Jacobian offers a geometric perspective on when and where the system locally supports transient growth in incidence or prevalence, even without crossing a true bifurcation threshold \cite{ORegan_transient_2020}.

Existing results in the infectious disease dynamics literature investigate both the equilibrium and transient behaviour in EWS as a result of the system geometry. Such analyses make use of the linear noise approximation (LNA) to find a linear probability distribution for the fluctuations of the system. Although normally expanded around the steady-states of the system, it can also be used to analyse the fluctuations from mean-field solutions. The resulting covariance matrix of these fluctuations evolves according to a time-dependent Lyapunov equation driven by the Jacobian of the deterministic system. In stationary settings, this equation can be solved algebraically to yield steady-state variance spectra, which has been carried out for various disease models \cite{Looker_identifying_2025,Miller_seasonal_2017,ORegan_mosquito_2016,southall_prospects_2020,southall_early_2021,gama_dessavre_problem_2019}. However, when the Jacobian varies in time, such as along a system trajectory, the covariance dynamics become explicitly non-autonomous. In such systems, the covariance may increase transiently even if the deterministic trajectory remains within a stable regime, leading to apparent signatures of instability that are purely geometric and transient in origin. Although there are some results in the literature warning that such signatures may reduce EWS applicability, we instead seek to exploit this behaviour to anticipate areas of interest in disease time-series (such as peaks and troughs).

In this paper, we build upon the theory of pseudo-bifurcations and generalised stability theory to consider how slowing-down behaviour and related EWS can occur in a general stochastic system under the LNA. We represent the time-varying Lyapunov equation for the covariance matrix in both the Jacobian and symmetric eigenbases to help illuminate how the evolving system geometry impacts the covariance matrix. These reveal contributions arising from the deterministic mode coupling, transient amplification and eigenbasis evolution. Next, we analyse the eigenvalues of the time-dependent Jacobian and its symmetric part for the susceptible-infectious-recovered (SIR) model with demography to identify areas of the SI phase plane where EWS may occur. Finally, we use a set of simulations to test the applicability of this theory to simulated case data and changes in both population and disease demographics.

\section{Time-dependent critical slowing down behaviour}
In this section, we build on the existing literature on pseudo-bifurcations and differential Lyapunov equations by considering a generalised stochastic system and how fluctuations away from a mean-field solution can display critical slowing down-like behaviour.

Let $Y$ be a stochastic variable obeying a stochastic differential equation (such as that given by the chemical Langevin equation) with corresponding drift/deterministic evolution given by the set of differential equations in the form 
\begin{equation}
    \dot{y}=f(y,t),\quad y_0=y(t_0)
\end{equation}
where $y$ is the state of the deterministic system and $t_0$ the initial time. We can then linearise $Y$ around the mean-field solution to these equations ($\bar{y}$), resulting in the fluctuation system $X=Y-\bar{y}$ with deterministic behaviour (denoted $x$) as
\begin{equation}
    \dot{x}=J(t)x,\quad x_0=x(t_0)
\end{equation}
where $J(t)$ is the Jacobian matrix of $f(y,t)$ evaluated at $\bar{y}$ at time $t$, and we note that these equations only hold whilst the perturbations are sufficiently close to the mean-field behaviour such that a linear approximation to the dynamics is appropriate. We now take concepts from the theory of non-autonomous linear equations and linear control theory that are well-known in their respective fields but, to our knowledge, are not well reported in the EWS literature (even in recent papers on pseudo-bifurcations) \cite{abou-kandil_matrix_2003,Behr_solutions_2019}.

Solutions to this set of equations can be written as 
\begin{equation}
    x(t)=\Phi(t,t_0)x(t_0),\quad \dot{\Phi}=J\Phi,\quad\Phi(t,t)=\bf{I}
\end{equation}
where $\Phi$ is the state-transition matrix (also called the fundamental matrix) of the linearised system. Noting that $J$ is a square matrix, and letting $||\cdot||$ be an induced matrix norm, the associated logarithmic norm of $J$ is defined as 
\begin{equation}
    \mu(J)=\lim_{h\rightarrow 0^+}{\dfrac{||{\bf I}+hJ||-1}{h}}.
\end{equation}

We can use this to show how the time-dependent eigenvalues of $J$ lead to CSD in the fluctuations as their real part approaches zero from below. Define also $\mu_*(J)=-\mu(-J)$ and using the upper ($\dfrac{\mathrm{d}}{\mathrm{d}t^+}(x):=\limsup_{h\rightarrow 0^+}{\dfrac{x(t+h)-x(t)}{h}}$) and lower ($\dfrac{\mathrm{d}}{\mathrm{d}t^-}(x):=\liminf_{h\rightarrow 0^+}{\dfrac{x(t)-x(t-h)}{h}}$) right Dini derivatives we derive exponential bounds on $||x||$. Although, a well-known result, we provide the full derivation in Appendix~\ref{app:gini} with the upper bound on the magnitude of the fluctuations given as
\begin{equation}
    \dfrac{\mathrm{d}}{\mathrm{d}t^+}\ln{||x(t)||} \le \mu(J).
\end{equation}

For $||\cdot||=||\cdot||_2$ we have
\begin{align}
    \dfrac{\mathrm{d}}{\mathrm{d}t^+}\ln{||g(t)||} &\le \lambda_{\max}\left(\dfrac{J+J^T}{2} \right)\\
    ||\Phi(t,t_0)||_2 &\le \exp{\left(\int_{t_0}^{t}{\lambda_{\max}\left(\dfrac{J+J^T}{2}\right)ds} \right)}
\end{align}
such that the instantaneous bounds on the growth of fluctuations are given by the maximal eigenvalue of the symmetric part of $J$, $H=(J+J^T)/2$, and the actual magnitude is bounded by Gronwall's inequality. As such, we only expect growth in $x$ if the spectral radius of $J+J^T$ is positive.

We note existing work on the definitions of pseudo-bifurcation and non-normal CSD where the time-independent and equilibrium case was considered in \cite{Troude_pseudo_2025}, with a framework for generalised system dynamics also presented in \cite{Troude_unifying_2025}. The remainder of this study aims to extend these studies to show how the linearised dynamics lead to predictable changes in the covariance matrix in the non-autonomous case and how these geometric effects can be used to anticipate areas of interest in infections data.

\subsection{Variance-based indicators in practice}
We continue by considering how the variance and other indicators of transient dynamics might behave in stochastic realisations of epidemic models and dynamical systems in general.

Consider a linearised system corresponding to fluctuations away from a mean-field solution. Let $X(t)\sim N(0,\Sigma_t)$ be the deviations from the mean-field solution $\mu(t)$, as previously defined, then the stochastic differential equation (SDE) for the deviations is given by 
\begin{equation}
    dX_t = J(t)X_tdt+G(t)dW_t
\end{equation}
where $B(t)=G(t)G(t)^T$ is the noise-covariance matrix of the transitions in the system (commonly derived from the LNA applied to the master equation of $Y$) and $W_t$ represents the standard Wiener process. We note that this equation can be derived by linearising the stochastic representation of $Y$ such as from the chemical Langevin equation. Then the evolution of the covariance is given by the continuous-time-varying Lyapunov equation below 
\begin{equation}\label{eq:lyapunov}
    \dfrac{\mathrm{d}\Sigma_t}{\mathrm{d}t} = J(t)\Sigma_t+\Sigma_tJ^T(t)+B(t)
\end{equation}
and the fluctuations are a (multi-dimensional) Ornstein-Uhlenbeck process with time-varying coefficients. This equation has an analytic solution (known in control theory and SDE literature \cite{abou-kandil_matrix_2003,Behr_solutions_2019,Sarkka_Applied_2019} but to our knowledge not applied in an EWS context) in terms of the fundamental matrix $\Phi$ (as defined previously) by considering the similarity transform $Z(t)$ given by 
\begin{equation}
    Z(t)=\Phi(t)^{-1}\Sigma_t\Phi(t)^{-T},
\end{equation}
where $(\cdot)^{-T}$ denotes the inverse transpose of a matrix. By considering that $\Phi\Phi^{-1}=\bf{I}$, we also derive identities for $\dfrac{\mathrm{d}}{\mathrm{d}t}({\Phi}^{-1})$ and $(\dfrac{\mathrm{d}}{\mathrm{d}t}(\Phi^{-1}))^T$:
\begin{align}
    \dfrac{\mathrm{d}}{\mathrm{d}t}(\Phi\Phi^{-1}) &= \dfrac{\mathrm{d}}{\mathrm{d}t}(\bf{I}), \nonumber\\
    \dfrac{\mathrm{d}}{\mathrm{d}t}({\Phi})\Phi^{-1}+\Phi\dfrac{\mathrm{d}}{\mathrm{d}t}(\Phi)^{-1} &= 0, \nonumber\\
    \implies \dfrac{\mathrm{d}}{\mathrm{d}t}(\Phi^{-1}) &= -\Phi^{-1}\dot{\Phi}\Phi^{-1} =-\Phi^{-1}J, \nonumber\\
    (\dfrac{\mathrm{d}}{\mathrm{d}t}({\Phi}^{-1}))^T &= (-\Phi^{-1}J)^T=-J^T\Phi^{-T}.
\end{align}

Then differentiating $Y(t)$ gives (dropping the dependence on $t$ for brevity)

\begin{align}
    \dot Z &= -\Phi^{-1}J\Sigma\Phi^{-T} + \Phi^{-1}(J\Sigma+\Sigma J^T+B)\Phi^{-T} \nonumber\\
    &\qquad\qquad\qquad - \Phi^{-1}\Sigma J^T\Phi^{-T} \nonumber\\
    &= \Phi^{-1}B\Phi^{-T} \nonumber\\
&\implies Z(t) =Z(0)+\int_0^t{\Phi(s)^{-1}B(s)\Phi(s)^{-T}ds} \nonumber\\
&\implies \Sigma_t=\Phi(t)\left[\Sigma_0+\int_0^t{\Phi(s)^{-1}B(s)\Phi(s)^{-T}ds}\ \right]\Phi(t)^T\nonumber \\
\end{align}
where the last equality comes from noting that $\Phi(0)=0$ such that $Z(0)=\Sigma_0$. As previously stated, evolution equations in this form are known in other areas but have been reproduced here for the first time in the EWS literature. The remainder of this section represents our novel contributions to fluctuation theory.

We can also decompose our continuous-time Lyapunov equation for the covariance matrix further, and consider explicit bounds on the perturbations from the SDE. Note that for brevity, notation of the time-dependence may be dropped in the remainder of this paper (e.g. $J(t)$ would be written as $J$ and $\Sigma_t$ as $\Sigma$). 

We first consider the effects of $H$ on the total variance by taking the trace of Eq.~\ref{eq:lyapunov} yielding
\begin{align}
    \dfrac{\mathrm{d}\Sigma}{\mathrm{d}t} &= J\Sigma+\Sigma J^T+B \nonumber \\
    \dfrac{\mathrm{d}}{\mathrm{d}t}\operatorname{tr}\Sigma &= 2\operatorname{tr}(H\Sigma)+\operatorname{tr}B \nonumber\\
    \implies &\dfrac{\mathrm{d}}{\mathrm{d}t}\operatorname{tr}\Sigma\le 2\lambda_{\max}(H)\operatorname{tr}\Sigma + \operatorname{tr}B
\end{align}
where the first line comes from the cyclicity of the trace and the inequalities come from the fact that $\Sigma \succeq 0$. As $B$ is a noise-covariance matrix, we also know that $B\succeq 0$, and so this noise term continuously injects variance into the system with the deterministic contribution of $H$ dampening or amplifying these fluctuations depending on its spectral properties (as expected from CSD theory \cite{ORegan_transient_2020}). If $\lambda_{\max}(H)<0\implies\operatorname{tr}(H\Sigma)<0$, then there will be deterministic linear damping of the variance (and vice versa for $\lambda_{\max}(H)>0$). Therefore, a necessary condition for instantaneous linear growth of the variance is if $\lambda_{\max}(H)>0$ (with the instantaneous sufficient condition being $2\lambda_{\min}(H)+\operatorname{tr}B>0$ which, since $B\succ 0$, reduces to $H\succeq 0$).

Further, we can also separate the effects of `time-dependent CSD' defined in the normal way as a change in the real part of the dominant eigenvalue(s) of the Jacobian matrix from the transient effects caused by the eigenmodes of $H$.

Let, without loss of generality, $J(t)$ be diagonalisable with time-dependent right and left eigenvectors forming a biorthonormal eigenbasis and corresponding eigenvalues $\lambda_i$, such that $J(t)=V(t)\Lambda(t)V(t)^{-1},\Lambda(t)=\operatorname{diag}(\lambda_1,\dots,\lambda_n)$, where $V(t) = [v_1,\dots,v_n]$ is the matrix of right eigenvectors and $W(t)=[w_1,\dots,w_n], W^*V={\bf I}$ is the left eigenvectors, normalised to be biorthogonal. Define the modal covariance and noise by $\Sigma_y := V^{-1}\Sigma\,V^{-*},B_y := V^{-1}B\,V^{-*}$ and set $M(t):=V(t)^{-1}\dot V(t)=$. We note that $(\cdot)^*$ denotes the Hermitian transpose and arrive at the following proposition:

\begin{proposition}[Modal ( \(J\)-eigenbasis) decomposition]\label{thm:modal}
Under the above definitions and assumptions, the following matrix equality holds
\begin{equation}\label{eq:modal_matrix}
\dot\Sigma_y \;=\; \Lambda\Sigma_y + \Sigma_y\Lambda^* + B_y \;-\; M\Sigma_y \;-\; \Sigma_y M^*,
\end{equation}
where \((\cdot)^*\) denotes the Hermitian transpose, and, for the diagonal modal variances \(p_i(t):=(\Sigma_y)_{ii}\), the exact scalar ordinary differential equation (ODE) is
\begin{equation}\label{eq:modal_scalar}
\dot p_i \;=\; 2\Re(\lambda_i)\,p_i \;+\; (B_y)_{ii}
\;-\; 2\Re\left(\sum_{b} M_{ib}{\Sigma_y}_{bi}\right)
\end{equation}
for each \(i=1,\dots,n\).

\smallskip\noindent\textbf{Remark.} In \eqref{eq:modal_scalar}, the first term \(2\Re(\lambda_i)p_i\) is the pure modal growth/decay; \((B_y)_{ii}\) is the modal noise injection; the remaining sum encodes the time-evolution of the modal basis.
\end{proposition}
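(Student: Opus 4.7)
The plan is to differentiate the defining similarity transform $\Sigma_y = V^{-1}\Sigma V^{-*}$ directly, substitute the original Lyapunov equation \eqref{eq:lyapunov} for $\dot\Sigma$, and then read off both the matrix and scalar statements by exploiting the diagonality of $\Lambda$ and the Hermiticity of $\Sigma_y$.

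First I would compute the time derivative of $V^{-1}$ by the same trick already used for $\Phi^{-1}$ earlier in the excerpt: differentiating $V V^{-1} = \mathbf{I}$ yields $\tfrac{\mathrm d}{\mathrm dt}(V^{-1}) = -V^{-1}\dot V V^{-1} = -M V^{-1}$, and taking Hermitian transposes gives $\tfrac{\mathrm d}{\mathrm dt}(V^{-*}) = -V^{-*} M^*$. Applying the product rule to $\Sigma_y = V^{-1}\Sigma V^{-*}$ then gives three contributions, namely $-M\Sigma_y$, $V^{-1}\dot\Sigma V^{-*}$, and $-\Sigma_y M^*$.

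Next I would substitute $\dot\Sigma = J\Sigma + \Sigma J^T + B$ into the middle term. The key algebraic step is recognising that $V^{-1} J = \Lambda V^{-1}$ (from $J = V\Lambda V^{-1}$), so $V^{-1} J \Sigma V^{-*} = \Lambda \Sigma_y$, and symmetrically $V^{-1}\Sigma J^T V^{-*} = \Sigma_y \Lambda^*$ (treating $J$ as possibly complex via its biorthonormal decomposition, so that the right-hand side sees $\Lambda^*$ rather than $\Lambda^T$). Together with $V^{-1} B V^{-*} = B_y$, this produces exactly the matrix identity \eqref{eq:modal_matrix}.

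For the scalar equation \eqref{eq:modal_scalar} I would take the $(i,i)$ entry of \eqref{eq:modal_matrix}. Diagonality of $\Lambda$ gives $(\Lambda\Sigma_y)_{ii} = \lambda_i p_i$ and $(\Sigma_y\Lambda^*)_{ii} = \overline{\lambda_i}\, p_i$, which sum to $2\Re(\lambda_i)p_i$. For the basis-evolution piece, $(M\Sigma_y)_{ii} = \sum_b M_{ib}(\Sigma_y)_{bi}$ and $(\Sigma_y M^*)_{ii} = \sum_b (\Sigma_y)_{ib}\overline{M_{ib}}$; using that $\Sigma$ (hence $\Sigma_y$) is Hermitian so $(\Sigma_y)_{ib} = \overline{(\Sigma_y)_{bi}}$, these two sums are complex conjugates, and their sum collapses to $2\Re\bigl(\sum_b M_{ib}(\Sigma_y)_{bi}\bigr)$. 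The main obstacle, if any, is bookkeeping rather than mathematics: making sure the conjugate transpose on $V^{-*}$ is handled consistently so that $\Lambda^*$ (not $\Lambda^T$) appears in \eqref{eq:modal_matrix}, and invoking Hermiticity of $\Sigma_y$ at the right moment to collapse the two off-diagonal sums into a single real part.
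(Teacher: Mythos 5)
Your proposal is correct and follows essentially the same route as the paper's own proof: differentiate the similarity transform using $\tfrac{\mathrm d}{\mathrm dt}(V^{-1}) = -MV^{-1}$, substitute the Lyapunov equation, apply $V^{-1}J = \Lambda V^{-1}$ and its Hermitian transpose, and extract diagonal entries using the Hermiticity of $\Sigma_y$. Your treatment of the scalar equation is in fact slightly more explicit than the paper's (which leaves the collapse of the two sums into a single real part to the reader), and your bookkeeping remark about $\Lambda^*$ versus $\Lambda^T$ correctly identifies the one place where care is needed.
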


\begin{proof}
Differentiate \(\Sigma_y = V^{-1}\Sigma V^{-*}\):
\[
\dot\Sigma_y = \dot V^{-1}\Sigma V^{-*} + V^{-1}\dot\Sigma V^{-*} + V^{-1}\Sigma \dot V^{-*}.
\]
From the definition $V^{-1}V={\bf I}=V^{-*}V^*$ we obtain the identities \(\dot V^{-1} = -V^{-1}\dot V V^{-1} = - M V^{-1}\) and \(\dot V^{-*} = -(V^{-*}\dot V^* V^{-*})=-(MV^{-1})^*\) to then obtain
\[
\dot\Sigma_y = -M\Sigma_y - \Sigma_y M^* + V^{-1}\dot\Sigma V^{-*}.
\]
Substitute \(\dot\Sigma = J\Sigma + \Sigma J^T + B\), and use \(V^{-1}J = \Lambda V^{-1}\) and its hermitian transpose to get \(V^{-1}\dot\Sigma V^{-T} = \Lambda\Sigma_y + \Sigma_y\Lambda^T + B_y\). Rearranging and simplification of the expanded expression yields \eqref{eq:modal_matrix} and, noting that $\Sigma_y$ is hermitian, extracting diagonal entries of \(\dot\Sigma_y\) gives \eqref{eq:modal_scalar}.
\end{proof}

In the eigenmode basis, it is clear then that the degree of modal growth depends on the real parts of the eigenvalues of $J$ (as expected) and the projection of $B$ in the eigenspace of $J$. Intuitively, these equations decompose the covariance evolution (in the eigenmode basis), into contributions from the Jacobian matrix, the injected noise and the rate-of-change of the eigenbasis itself. 

\smallskip\noindent\textbf{Remark.} We also note that at stationarity, such that $\dot{\Sigma}_y=M=0$ we regain results previously derived in \cite{chen2019} as $(\Sigma_y)_{ij}=-(B_y)_{ij}/(\lambda_i+\lambda_j)$.

Finally, we note that these equations only hold under the assumption of a diagonalisable Jacobian. For many dynamical disease systems, this is true along all trajectories apart from the instant that the system crosses through a bifurcation point or other point where eigenvalues collide. Nonetheless, it is possible to extend the above derivation to a defective matrix through generalised eigenvectors and the Jordan-normal form. Omitting the algebra, the final form is
\begin{equation}
    \dot{\Sigma}_y=(\Lambda+N)\Sigma_y+\Sigma_y(\Lambda+N)^*+B_y-M\Sigma_y-\Sigma_yM^*
\end{equation}
where $J=V(\Lambda+N)V^{-1}$, such that $\Lambda+N$ is the matrix of generalised eigenvalues, $\Lambda$ are the eigenvalues of the system, and $N$ is upper triangular and nilpotent for each Jordan block. Then, the derivation continues in the exact same way as in deriving \eqref{eq:modal_matrix} replacing $\Lambda$ with $\Lambda+N$. The scalar equation for the diagonal modal variances behaves identically to \eqref{eq:modal_scalar} with the real part of the eigenvalues now just replaced with the real part of the generalised eigenvalues.

\smallskip\noindent\textbf{Remark.} We note that in the derivation of \eqref{eq:modal_matrix}, we have arrived at an intermediary formula which holds for a general time-varying basis that we will make use of in the next theorem. Explicitly, this evolution is given as
\begin{equation}\label{eq:ev_basis}
\dot{\Sigma}_T=J_T\Sigma_T+\Sigma_TJ_T^*+B_T+\left(\dot{T}T^{-1}\right)\Sigma_T+\Sigma_T\left(\dot{T}T^{-1}\right)^*
\end{equation}
where $T$ is the change of basis matrix such that $Z_T=TZT^*$ for all matrices $Z$ (where we note that setting $T=V^{-1}$ leads to the previously derived equations).

We next consider a decomposition in terms of the eigenbasis of $H$ to identify the conditions for non-normal growth in perturbations and variance. Let $H=\dfrac{1}{2}(J+J^T), K=\dfrac{1}{2}(J-J^T)$ such that $J=H+K$. Further, let $S$ be an orthonormal matrix that diagonalises $H$ with $H=S\Lambda_HS^T$ such that the eigenvalues $\mu$ are all real (as $H$ is a real-symmetric matrix). Define $X_H=S^TXS$ for any matrix $X$ and $M_S=\dot{S}^TS$. We then arrive at the following proposition:

\begin{proposition}[Symmetric (\(H\)) eigenbasis decomposition]\label{thm:symmetric}
Under the above assumptions and definitions the covariance ODE transforms to
\begin{align}\label{eq:H_basis_matrix}
\dot\Sigma_H \;&=\; (\Lambda_H\Sigma_H+\Sigma_H\Lambda_H)+(K_H\Sigma_H-\Sigma_HK_H)+B_H \nonumber \\
&\quad\quad\quad+(M_S\Sigma_H-\Sigma_HM_S)
\end{align}
and the diagonal elements ($\sigma_i=(\Sigma_H)_{ii}$) satisfy the scalar ODE
\begin{equation}\label{eq:H_basis_scalar}
\dot\sigma_i \;=\; 2\mu_i\,\sigma_i
\;+\; (B_H)_{ii} \;+\; 2\sum_{k} \big( {K_H}_{ik}+{M_S}_{ik}){\Sigma_H}_{ki}.
\end{equation}

\smallskip\noindent\textbf{Remark.} We note that summing \eqref{eq:H_basis_scalar} and \eqref{eq:modal_scalar} over $i$ leads to the previously derived evolution equation for the trace of $\Sigma$ as the trace is invariant under a change of basis. Thus, these equations intuitively give the contributions of the modal and non-normal dynamics to the total variance in the system.
\end{proposition}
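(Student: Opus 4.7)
The plan is to derive \eqref{eq:H_basis_matrix} by specialising the general time-varying-basis evolution \eqref{eq:ev_basis} to the orthogonal change of basis $T=S^{T}$, and then to extract \eqref{eq:H_basis_scalar} by taking diagonal entries and exploiting the symmetry/antisymmetry properties of the resulting terms.

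First I would note that, since $S$ is orthogonal, $T^{-1}=S$, so that $T\Sigma T^{*}=S^{T}\Sigma S=\Sigma_H$, $T B T^{*}=B_H$ and $\dot T\,T^{-1}=\dot S^{T} S=M_S$. Substituting these into \eqref{eq:ev_basis} gives
\[
\dot\Sigma_H=J_H\Sigma_H+\Sigma_H J_H^{*}+B_H+M_S\Sigma_H+\Sigma_H M_S^{*},
\]
with $J_H=S^{T}JS$. I would then decompose $J_H$ using $J=H+K$: since $S$ diagonalises $H$ one has $S^{T}HS=\Lambda_H$, and since $K$ is antisymmetric, $K_H=S^{T}KS$ satisfies $K_H^{T}=S^{T}K^{T}S=-K_H$ and is therefore antisymmetric. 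Hence $J_H=\Lambda_H+K_H$ and (in the real case) $J_H^{*}=\Lambda_H-K_H$. Differentiating $S^{T}S=\mathbf{I}$ further gives $\dot S^{T}S=-S^{T}\dot S$, so $M_S^{*}=-M_S$ as well. Substituting these into the above equation and regrouping the terms by type — diagonal symmetric, skew commutator, noise, basis-rotation — produces \eqref{eq:H_basis_matrix} directly; this part is essentially bookkeeping.

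To obtain the scalar equation \eqref{eq:H_basis_scalar} I would take the $(i,i)$ entry of \eqref{eq:H_basis_matrix}. The term $\Lambda_H\Sigma_H+\Sigma_H\Lambda_H$ contributes $2\mu_i\sigma_i$ because $\Lambda_H$ is diagonal with entries $\mu_i$, and $(B_H)_{ii}$ passes through unchanged. For the two commutator terms I would use the identity $(A\Sigma_H-\Sigma_H A)_{ii}=2\sum_k A_{ik}\Sigma_{H,ki}$, which holds whenever $A$ is antisymmetric and $\Sigma_H$ is real symmetric. Applying this in turn to $A=K_H$ and to $A=M_S$ and summing yields the $2\sum_k(K_{H,ik}+M_{S,ik})\Sigma_{H,ki}$ contribution in \eqref{eq:H_basis_scalar}.

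The main subtlety to watch is precisely this antisymmetry identity, which depends on $\Sigma_H$ being real and symmetric; this is immediate from $\Sigma$ being a real covariance and $S$ being orthogonal, and it is self-consistent with \eqref{eq:H_basis_matrix} because each commutator added to its transpose vanishes, so the evolution preserves symmetry of $\Sigma_H$. Once this consistency is verified, the derivation reduces to the algebraic manipulations above and no further obstacle arises.
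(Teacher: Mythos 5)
Your proposal is correct and follows essentially the same route as the paper: specialise \eqref{eq:ev_basis} to $T=S^{T}$, use the skew-symmetry of $K_H$ and $M_S$ (from orthogonality of $S$ and antisymmetry of $K$) to obtain \eqref{eq:H_basis_matrix}, and then exploit the same skew-symmetry together with the symmetry of $\Sigma_H$ when taking diagonal entries to get \eqref{eq:H_basis_scalar}. Your write-up in fact supplies the algebraic details (the identity $(A\Sigma_H-\Sigma_H A)_{ii}=2\sum_k A_{ik}\Sigma_{H,ki}$ for antisymmetric $A$) that the paper's terse proof leaves implicit.
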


\begin{proof}
Setting $T=S^T$ in \eqref{eq:ev_basis} and noting that $S$ is real such that $(S^T)^*=S$, then taking the expansion of $J_H$ and using the fact that $K_H,M_S$ are skew-symmetric leads directly to \eqref{eq:H_basis_matrix}. Finally, taking the diagonal elements of \eqref{eq:H_basis_matrix} and again noting that $K_H,M_S$ are skew-symmetric to simplify the commutator terms leads to \eqref{eq:H_basis_scalar}.
\end{proof}

To summarise this section; along a trajectory, it is possible to have growth in the variance from the real part of the eigenvalues of the Jacobian, akin to `time-dependent' critical slowing down. This balances the noise injection from $B$ and the transient growth from $H$ during the periods when its eigenvalues are positive.

\section{SIR model: theoretical results} \label{Section: SIR Jacobian}
Having derived conditions for modal and transient growth in the variance of fluctuations and the magnitude of the fluctuations, we now test the applicability of the theory to a simple infection model. We investigate the conditions in the $SI$ phase plane for the Jacobian eigenvalues to be complex and for the $H$ eigenvalues to be positive. 

\subsection{Jacobian and its symmetric part}
We begin by considering the basic SIR model with demography, representing transmission of an infectious pathogen in a population:
\begin{align}
    \dfrac{\mathrm{d}S}{\mathrm{d}t} &= -\beta SI +\mu(1-S) \\
    \dfrac{\mathrm{d}I}{\mathrm{d}t} &= \beta SI -(\gamma+\mu)I \\
    \dfrac{\mathrm{d}R}{\mathrm{d}t} &= \gamma I - \mu R.
\end{align}
In this case, we take $S,I,R$ to be the proportion in each state and so we note that the system is 2D as $S+I+R=1$ and will instead take these to be integers when simulating stochastic realisations of the model later. The time-dependent Jacobian (for $S,I$) is then
\begin{equation}
    \begin{bmatrix}
        -\beta I-\mu & -\beta S \\
        \beta I & \beta S - \gamma - \mu
    \end{bmatrix}
\end{equation}
with corresponding characteristic polynomial
\begin{equation}
    \lambda^2+\lambda(\beta I-\beta S + \gamma + 2\mu)+(\beta\gamma I+\beta \mu I-\beta\mu S+\gamma\mu+\mu^2)
\end{equation}
where $\lambda$ are the eigenvalues. This polynomial has discriminant
\begin{equation}
    D=\beta^2(I-S)^2-2\beta\gamma(I+S)+\gamma^2
\end{equation}
such that the eigenvalues of the time-dependent Jacobian are complex conjugate when this is less than zero and real otherwise. This is symmetric about the $S=I$ line. We note that in the full system representation (with $R$), the remaining eigenvalue is $\lambda_3=-\mu <0$ such that the asymptotic stability of perturbations is solely dependent on the eigenvalues corresponding to the $SI$ plane.

We note that in this system $R_0=\frac{\beta}{\gamma+\mu}$ is known as the basic reproduction number, the expected number of secondary infections from one infectious individual in an otherwise susceptible population. The system has two fixed points, the disease-free equilibrium (DFE=$(S=1,I=0)$) and the endemic equilibrium (EE=$(S=\frac{1}{R_0},I=\frac{\mu}{\beta}(R_0-1))$) which collide and exchange stability for $R_0=1$ in a transcritical bifurcation. For $R_0\ge1$, the EE exists and is asymptotically stable, where for $R_0<1$ the DFE is stable.

Another quantity of interest to epidemic modellers is the time-varying, effective reproduction number ($R_t$). For simple infection models, this can be defined as $R_t=R_0 S\le R_0$ and we observe that $R_t>1$ if, and only if, $\frac{\mathrm{d}I}{\mathrm{d}t}>0$. Although not an equilibrium bifurcation, this system-state-dependent quantity dictates whether the epidemic instantaneously grows or decays, and transitions through $R_t=1$ have previously been anticipated in the EWS literature \cite{Looker_identifying_2025}. 

We now seek to investigate the behaviour of the manifold $D=0$ as $R_0$ is varied, since the number of times trajectories cross this manifold is the number of times the time-dependent eigenvalues switch complexity. WLOG we set $I=0$ to consider the number of intersections of the manifold with the $S$ axis (in the $SI$ plane). Setting $I=0$ and solving D=0, while noting that $S,\gamma,\beta,\mu,R_0\ge 0$ yields one intersection with the $S$ axis at 
\begin{align}
 S^D=\dfrac{\gamma}{\beta}=\dfrac{\gamma}{R_0(\gamma+\mu)}.
\end{align}
For $R_0 \ge 1$, we have
\begin{equation}
    S^D=\dfrac{\gamma}{R_0(\gamma+\mu)} \le \dfrac{\gamma}{\gamma+\mu}<1
\end{equation}
and so $S^D$ is within the biologically feasible domain of $S$. As $D$ is symmetric in $S,I$, we also find $I^D=S^D$, where $I^D$ is the intersection of $D$ with the $S=0$ axis. Turning to the $R_0<1$ case we still have $S^D=I^D=\frac{\gamma}{\beta}$,
but we note that this is no longer restricted to within the domain of $S,I\in[0,1]$ and is independent of $\mu$ (depending only on the characteristics of the disease). For the intersection point to be within the feasible region we require $\gamma<\beta$, while also require $\mu$ to be sufficiently large such that $\beta<\gamma+\mu$ still holds (so that $R_0<1$). Intuitively, this only applies for diseases in populations where the infection process are of similar time scales as the natural demography.

In either $R_0$ case, as $D$ is quadratic in $S,I$, and $S, I$ have a monotonically decreasing relationship, it cannot change convexity in the region. So, the number of times a trajectory can cross the manifold depends on $\frac{\gamma}{\beta},R_0$ and the gradient of $\frac{\mathrm{d}I}{\mathrm{d}t},\frac{\mathrm{d}S}{\mathrm{d}t}$ along the manifold.

We wish to show that, for biologically reasonable initial conditions, the system will cross this manifold at least twice for $R_0 \ge 1$ and a maximum of once for $R_0<1$.

To simplify the analysis, we will consider each case of $R_0$ individually. We begin with $R_0\ge1$ such that the EE is globally asymptotically stable, and the DFE is unstable. First, we look at just the dependence of $D$ on $S$, there are two parts of the $D=0$ manifold (on either side of $S^D=\frac{\gamma}{\beta}$ we will label $D^+,D^-$). We need to show where $\frac{\mathrm{d}I}{\mathrm{d}t}>0$ along $D^+$ as, with the EE being globally asymptotically stable, the trajectory will then eventually cross over $D^-$ (note this is due to the EE being to the right of the intersection point). We note that as the EE is a stable spiral (in this $R_0$ regime), the trajectories that do cross over $D^+$ will continuously cross over $D^+$ and then $D^-$ every time they oscillate inwards to the EE. However, since $\frac{\mathrm{d}I}{\mathrm{d}t}=0\iff S=\frac{1}{R_0} \geq S^D$, there is a wedge where the trajectory cannot cross upwards across $D^+$ (see Fig.~\ref{fig:placeholder}).

\begin{figure}[!h]
    \centering
   \begin{tikzpicture}[x=0.75pt,y=0.75pt,yscale=-1,xscale=1]

\draw    (146,377) -- (320,377) ;
\draw    (320,245.5) -- (320,377) ;
\draw    (146,377) .. controls (210.5,365.25) and (308,290.75) .. (320,245.5) ;

\draw    (254,309.25) -- (254,335.5) ;
\draw [shift={(254,337.5)}, rotate = 270] [color={rgb, 255:red, 0; green, 0; blue, 0 }  ][line width=0.75]    (10.93,-3.29) .. controls (6.95,-1.4) and (3.31,-0.3) .. (0,0) .. controls (3.31,0.3) and (6.95,1.4) .. (10.93,3.29)   ;
\draw    (222.5,377) -- (241.5,377) ;
\draw [shift={(243.5,377)}, rotate = 180] [color={rgb, 255:red, 0; green, 0; blue, 0 }  ][line width=0.75]    (10.93,-3.29) .. controls (6.95,-1.4) and (3.31,-0.3) .. (0,0) .. controls (3.31,0.3) and (6.95,1.4) .. (10.93,3.29)   ;
\draw    (309.5,360) -- (327.5,360) ;
\draw [shift={(329.5,360)}, rotate = 180] [color={rgb, 255:red, 0; green, 0; blue, 0 }  ][line width=0.75]    (10.93,-3.29) .. controls (6.95,-1.4) and (3.31,-0.3) .. (0,0) .. controls (3.31,0.3) and (6.95,1.4) .. (10.93,3.29)   ;
\draw    (312.5,297.5) -- (330.5,297.5) ;
\draw [shift={(310.5,297.5)}, rotate = 0] [color={rgb, 255:red, 0; green, 0; blue, 0 }  ][line width=0.75]    (10.93,-3.29) .. controls (6.95,-1.4) and (3.31,-0.3) .. (0,0) .. controls (3.31,0.3) and (6.95,1.4) .. (10.93,3.29)   ;

\draw  [fill={rgb, 255:red, 0; green, 0; blue, 0 }  ,fill opacity=1 ] (317,333.38) .. controls (317,331.65) and (318.4,330.25) .. (320.13,330.25) .. controls (321.85,330.25) and (323.25,331.65) .. (323.25,333.38) .. controls (323.25,335.1) and (321.85,336.5) .. (320.13,336.5) .. controls (318.4,336.5) and (317,335.1) .. (317,333.38) -- cycle ;
\draw   (147.5,386) .. controls (147.5,390.67) and (149.83,393) .. (154.5,393) -- (223.75,393) .. controls (230.42,393) and (233.75,395.33) .. (233.75,400) .. controls (233.75,395.33) and (237.08,393) .. (243.75,393)(240.75,393) -- (313,393) .. controls (317.67,393) and (320,390.67) .. (320,386) ;
\draw (192.5,400.9) node [anchor=north west][inner sep=0.75pt]    {$\dfrac{1}{R_{0}}\left(\dfrac{\mu }{\gamma +\mu }\right)$};
\node[right=4pt] at (320.13,333.38) {$\mathrm{EE}$};

\end{tikzpicture}
    \caption{Phase diagram for the `wedge' where the black dot indicates the EE and arrows indicate the direction of the flow.}
    \label{fig:placeholder}
\end{figure}
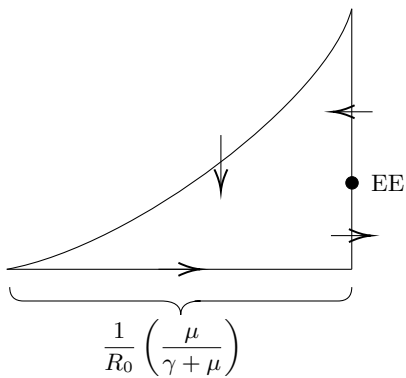

We can see that this is infinitesimal in size for $\gamma \gg \mu$ such that for diseases where the natural demography is on a much larger time scale than outbreaks, system trajectories are still likely to cross over $D^+$. For diseases that do start close to this trapping region, or in populations with similar $\mu,\lambda$, the EE behaves more like a stable node and trajectories starting close to this wedge will not cross $D$ (from below).

Turning to the $R_0<1$ case, a similar derivative analysis shows that
\begin{equation}
    \dfrac{\mathrm{d}I}{\mathrm{d}t}=I(\gamma+\mu)(R_0 S-1)\le(\gamma+\mu)(R_0-1)<0,
\end{equation}
where the last inequality comes from $R_0<1$. So, regardless of the value of $\frac{\gamma}{\beta}$, we expect no crossings unless the system starts above the $D=0$ manifold (which is unlikely for most real-world systems).

We note that the relationship between $I^D$ and $I^*=\frac{\mu}{\beta}(R_0-1)$ was not considered as, apart from the behaviour near the wedge, whether a system trajectory crosses the manifold can be almost purely determined by the above analysis unless the system starts in a biologically unreasonable area with $I\approx 1$. Even in this case, the symmetry of the manifold with respect to S and I leads to identical conclusions on the relationship between the number of crossings and $R_0$.

\begin{table}
\caption{\label{tab:params}}
\begin{ruledtabular}
\begin{tabular}{lcccc}
& $R_0$ & $\beta$ (days$^{-1}$) & $1/\gamma$ (days) & $1/\mu$ (years)\\
High $R_0$, low $\mu$ & 4 & 0.2859  & 14  & 80  \\ 
High $R_0$, high $\mu$ & 4 & 0.2859 & 14 & 1/73  \\
Low $R_0$, low $\mu$ & 0.840 & 0.06  & 14  & 80 
\end{tabular}
\end{ruledtabular}
\end{table}

To illustrate this analysis numerically, we plot the SI phase plane in Fig.~\ref{fig:SIR} for the parameter values given in the top row of Table~\ref{tab:params}. The colouring of the heat-map corresponds to the discriminant of $J(t)$, with the curved and dotted line the $D=0$ manifold such that we expect normal mode growth in fluctuations as it is approached from the red region.

\begin{figure}[h!]
    \centering
    \includegraphics[width=\linewidth]{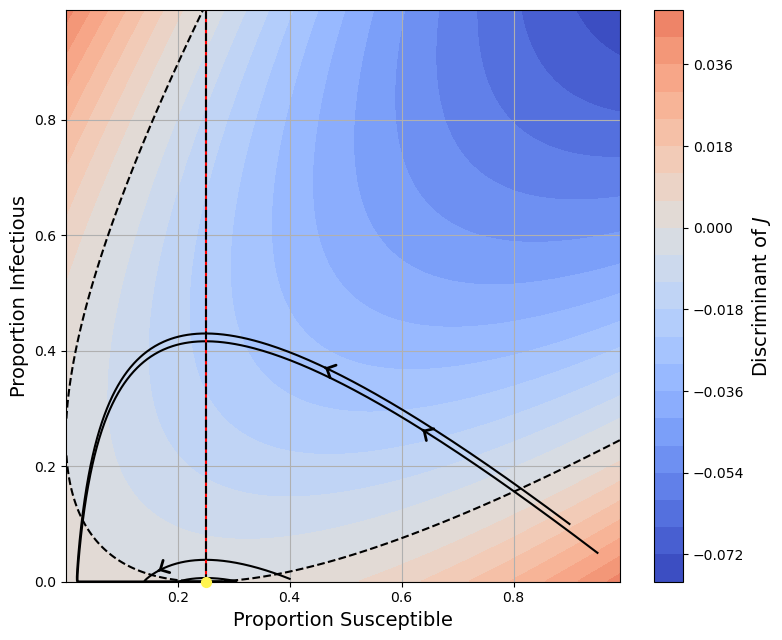}
    \caption{Phase plane of the SIR model with $R_0=4,1/\mu=80\text{ years}$ and heat map denoting the discriminant of the Jacobian. The curved dashed black line shows the $D=0$ manifold, the vertical dashed black line corresponds to $R_t=1$ and the near collinear vertical solid red line indicates the $S$ coordinate of the endemic equilibrium. The solid yellow dot indicates the endemic equilibrium.}
    \label{fig:SIR}
\end{figure}

\newpage
We can see that the trajectories (shown by the black lines) indeed cross the manifold at least twice for this set of parameter values, as expected from the theory. Although the wedge width is negligible for these specific contagion parameters, it is still noticeable between $S^D$ (the vertical black line) and $S^*=\frac{1}{R_0}$ (the vertical red line). In Fig.~\ref{fig:SIRhighmu}, we can see the effects of having a biologically-unreasonably high $\mu$ value; the wedge is much larger and for many reference trajectories we would not expect asymptotic growth in perturbations.

\begin{figure}[h]
    \centering
    \includegraphics[width=\linewidth]{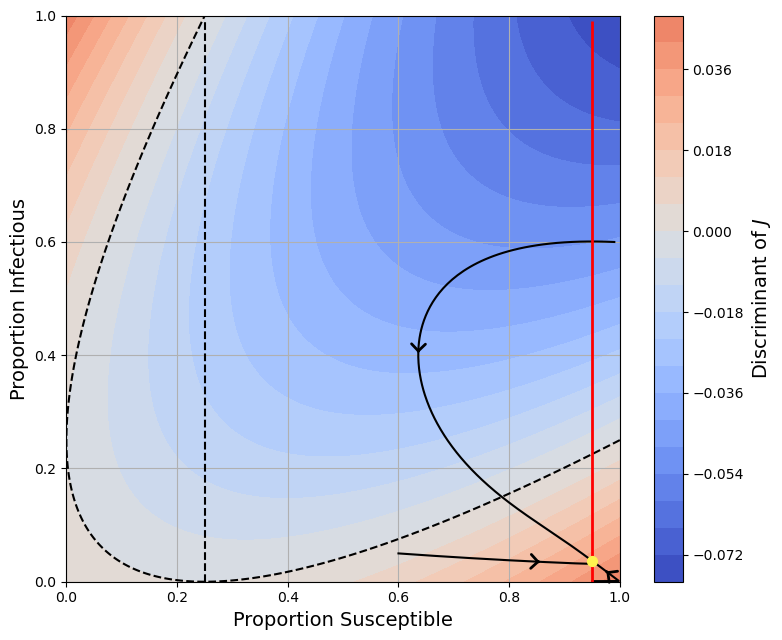}
    \caption{\label{fig:SIRhighmu} Phase plane of the SIR model with $R_0=4,1/\mu=1/73\text{ years}$ and heat map denoting the discriminant of the Jacobian. The curved dotted black line shows the $D=0$ manifold, the vertical dotted black line corresponds to $R_t=1$ and the vertical solid red line indicates the $S$ coordinate of the endemic equilibrium. The solid yellow dot indicates the endemic equilibrium.}
\end{figure}

Finally, we plot the phase plane for the case where $R_0<1$ and $\mu$ is biologically reasonable in Fig.~\ref{fig:SIRlowr0}. The trajectories also clearly do not cross the manifold unless they start above it, so perturbations are expected to eventually decay unless there is a high disease prevalence.

\begin{figure}[h]
    \centering
    \includegraphics[width=\linewidth]{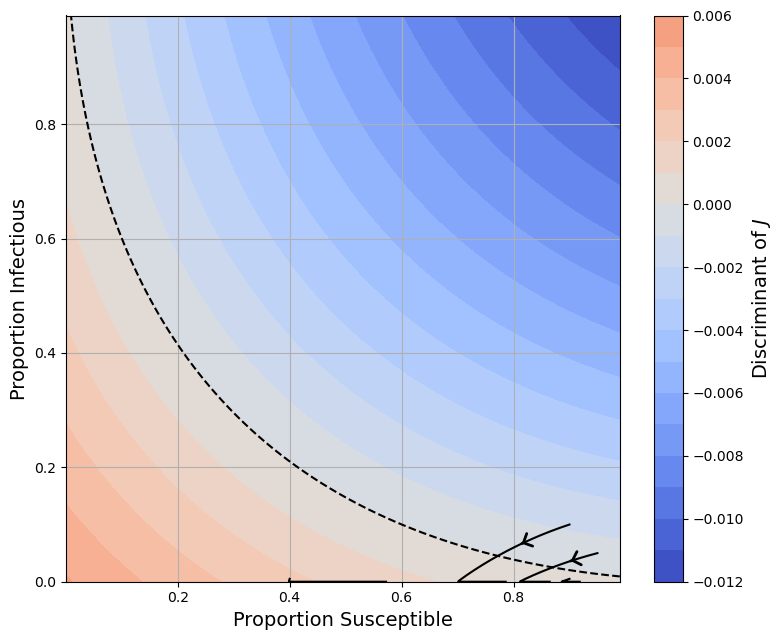}
    \caption{Phase plane of the SIR model with $R_0=0.840,1/\mu=80\text{ years}$ and heat map denoting the discriminant of the Jacobian. The curved dotted black line shows the $D=0$ manifold and there is no endemic equilibrium.}
    \label{fig:SIRlowr0}
\end{figure}

We now turn to the conditions for non-normal/transient growth in the perturbations and associated variance. As stated, non-normal operators can lead to transient growth in transient fluctuations of dynamical systems. As previously stated, non-normal matrices are matrices that do not commute with their Hermitian transpose (i.e., their commutator, $[A,A^*] := AA^*-A^* A \neq 0$). For infectious disease compartmental models, this corresponds to a non-normal Jacobian, where the transient dynamics relate to the symmetric part of the Jacobian ($H$) as analysed in the previous section.

Indeed, if we look at this simple model, we find that the commutator, $[J,J^T]$, is 
\begin{equation}
\begin{bmatrix}
        \beta^2(S^2-I^2) & -(\beta S+\beta I)(\beta I+\beta S - \gamma)\\
        -(\beta S + \beta I)(\beta S+\beta I-\gamma) & \beta^2(I^2-S^2)
    \end{bmatrix},
\end{equation}
and is non-zero unless $S=I=0$. 

We now consider the eigenvalues of $H$ to assess the possibility of non-modal growth in transient dynamics too. We first have:
\begin{equation}
   H = \begin{bmatrix}
        -(\beta I+\mu) & \beta (I -S)/2 \\
        \beta(I-S)/2 & (\beta S-\gamma-\mu)
    \end{bmatrix}
\end{equation}
which, from the Sylvester criterion, cannot be positive definite for positive parameters. Thus, the eigenvalues have either opposite signs or are both negative. H has eigenvalues
\begin{align}
 \lambda^{\pm} &= \dfrac{\beta (S-I)-(2\mu+\gamma)\pm\sqrt{\alpha}}{2}
\end{align}
where \begin{equation}
    \begin{array}{lcl}
\alpha &=& (\beta(I-S)+2\mu+\gamma)^2+\beta^2(I+S)^2\\
&&\qquad -4(\mu(\mu+\gamma)+\beta\mu(I-S)+\beta\gamma I).
\end{array}
\end{equation}
We note that, as the first leading minor is negative, $\lambda^-$ is strictly negative. The sign of $\lambda^+$ is then given by the relationship $\lambda^+\lambda^-=\det H$ so that $\text{sgn}(\lambda^+) = -\text{sgn}(\det H)$. Transient growth in fluctuations is possible for $\lambda^+>0$. We can conduct a similar analysis as for the $D=0$ manifold by considering where $H_D=\det H=0$. Again, restricting to the $I=0$ case we have the intersection given by 
\begin{align}
 S_{DH} := \dfrac{-\mu/\beta\pm{\sqrt{(\mu/\beta)^2+4\mu /R_0}}}{2},
\end{align}
where one root is always negative, and we can show that the other root is always less than or equal to $S^*=\frac{1}{R_0}$. Indeed, we can show that if $S\ge \frac{1}{R_0}$ then $\det H <0$ and so $S<\frac{1}{R_0}$ on $H_D$. Considering $\det H$ again, we have 
\begin{align}
    \dfrac{\partial }{\partial S}(\det H) &= -\beta \mu-\beta^2(S+I)/2 <0 \nonumber\\
    \det H|_{S=1/R_0} &=-\dfrac{(\gamma+\mu)^2}{4}(1-IR_0)^2\le0,
\end{align}
so $\det H$ is strictly decreasing for $S\in[0,1]$ and $\det H|_{S=\frac{1}{R_0}}\le 0$ such that the $H_D$ manifold does not exist past $S=\frac{1}{R_0}$. Thus, we have that $S_{DH}\le S^*$ and similar to the analysis of the $D=0$ manifold, we would expect trajectories to cross $H_D$ at least twice if the system starts sufficiently away from the EE for $R_0\ge 1$. Notably, if the system crosses $H_D$ at all, it must cross again via the above bound on its S domain.

This is illustrated in Fig.~\ref{fig:SIRnn}, where trajectories starting with a sufficient proportion susceptible or infectious cross the manifold twice. We note that in the following phase planes, a red colouring indicates that $\det H <0$ such that $\lambda^+>0$ and transient growth in fluctuations is possible.

\begin{figure}[h!]
    \centering
    \includegraphics[width=\linewidth]{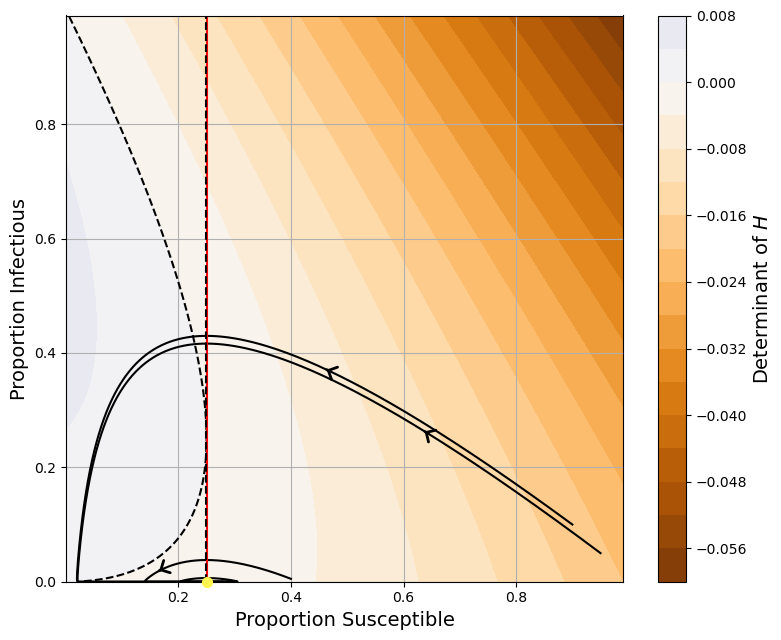}
    \caption{\label{fig:SIRnn} Phase plane of the SIR model with $R_0=4,1/\mu=80\text{ years}$ and heat map denoting the determinant of $H$. The curved dotted black line shows the $H_D=0$ manifold, the vertical dotted black line corresponds to $R_t=1$ and the vertical solid red line indicates the $S$ coordinate of the endemic equilibrium. The solid yellow dot indicates the endemic equilibrium.}
\end{figure}

Trajectories that do not cross the manifold start close to the stable EE and indicate that transient growth in fluctuations is always possible. The trajectories indicate that transient growth in perturbations is not possible just after the epidemic peak has been reached (the vertical line). More interesting behaviour can be seen when $\mu \gg 0$, as shown in Fig.~\ref{fig:SIRnnhighmu}. Now, the number of crossings is dependent on the initial state of the system, with a large initial infected population required to cross twice. The system must have a high proportion susceptible for transient growth in fluctuations to be possible.

\begin{figure}[h!]
    \centering
    \includegraphics[width=\linewidth]{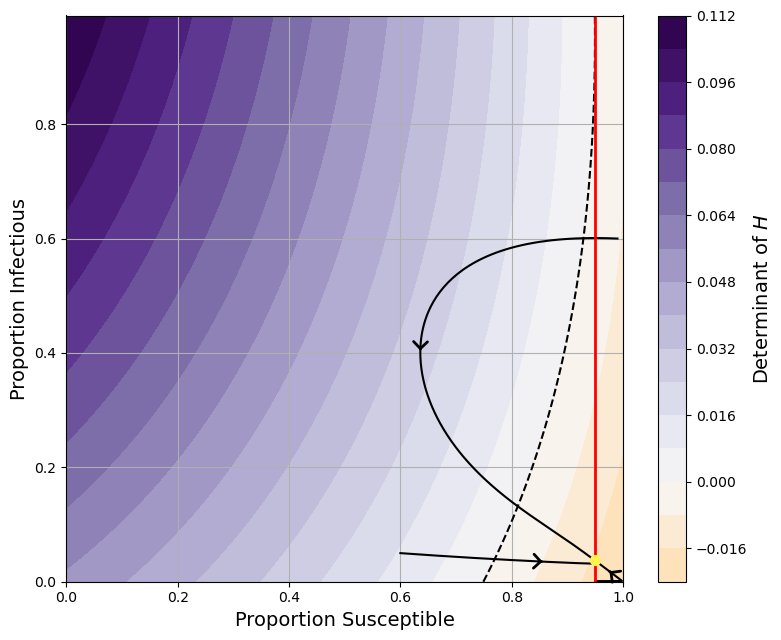}
    \caption{\label{fig:SIRnnhighmu} Phase plane of the SIR model with $R_0=4,1/\mu=1/73\text{ years}$ and heat map denoting the determinant of $H$. The curved dotted black line shows the $H_D=0$ manifold, the vertical dotted black line corresponds to $R_t=1$ and the vertical solid red line indicates the $S$ coordinate of the endemic equilibrium. The solid yellow dot indicates the endemic equilibrium.}
\end{figure}

For $R_0<1$ we observe phase planes similar to Fig.~\ref{fig:SIRnnlowR0} where a sufficiently high initial infectious population is required to have $\det H>0$. Again, this behaviour is expected from the arguments used when considering the $D=0$ manifold.

\begin{figure}[h!]
    \centering
    \includegraphics[width=\linewidth]{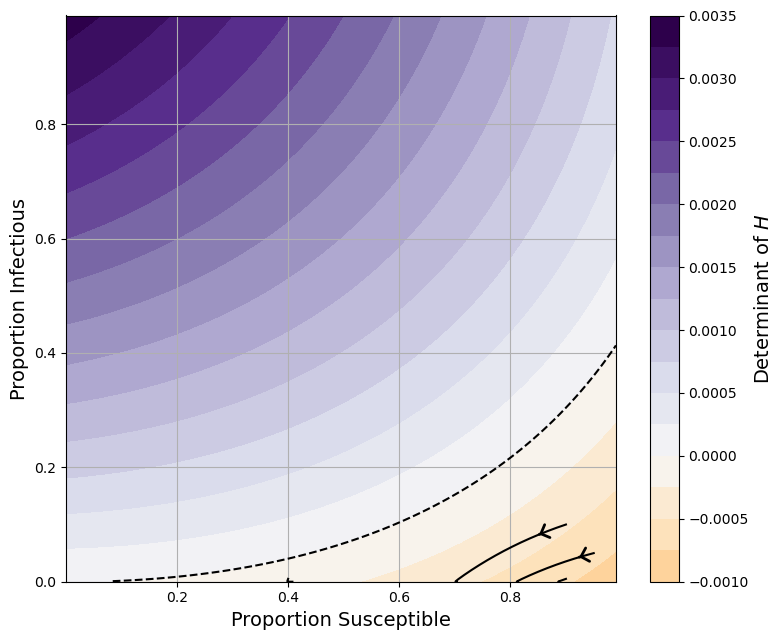}
    \caption{\label{fig:SIRnnlowR0} Phase plane of the SIR model with $R_0=0.840,1/\mu=80\text{ years}$ and heat map denoting the determinant of $H$. The curved dotted black line shows the $H_D=0$ manifold and there is no endemic equilibrium.}
\end{figure}

\newpage
\subsection{Theoretical covariance evolution}
Having analysed the geometry of the SIR model and conditions for both asymptotic and transient growth in time-dependent fluctuations from the mean-field solution, we can now investigate the behaviour of the time-dependent covariance matrix of detrended fluctuations under the linear-noise approximation. 

We simulate three outbreaks of a disease shown in Figs.~\ref{fig:SIRmeanhighR0} to~\ref{fig:SIRmeanhighS0}. These pathogen dynamics and initial conditions were chosen to best demonstrate the theoretical variance decomposition and are not supposed to represent any previous or current real-world pandemic.

\begin{figure}[h]
    \centering
    \includegraphics[width=\linewidth]{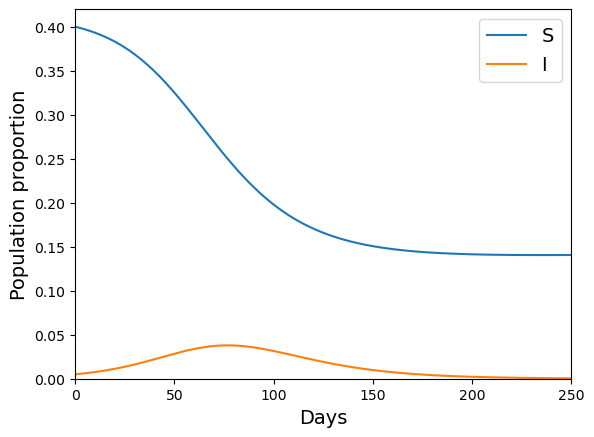}
    \caption{\label{fig:SIRmeanhighR0} Proportion susceptible and infectious for $R_0=4,1/\mu=$ 80 years and initial conditions $(S=0.4,I=0.05,R=0.595$.}
\end{figure}

\begin{figure}[h]
    \centering
    \includegraphics[width=\linewidth]{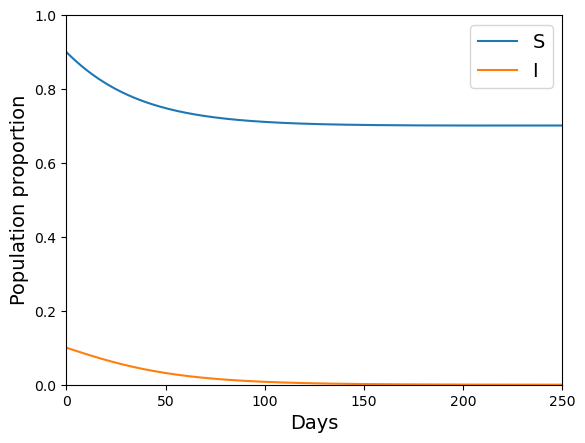}
    \caption{\label{fig:SIRmeanlowR0} Proportion susceptible and infectious for $R_0=0.840,1\mu=$ 80 years and initial conditions $(S=0.9,I=0.1,R=0$.}
\end{figure}

\begin{figure}[h]
    \centering
    \includegraphics[width=\linewidth]{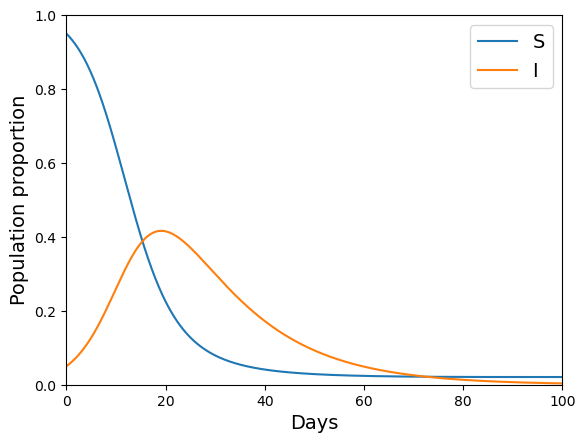}
    \caption{\label{fig:SIRmeanhighS0} Proportion susceptible and infectious for $R_0=4,1/\mu=$ 80 years and initial conditions $(S=0.95,I=0.05,R=0$.}
\end{figure}
We work with the covariance matrix in the $J$ eigenbasis (with coordinate transformation given by the matrix of eigenvectors, V) and denoting the unit vector in the $I$ direction by $e_I$, and we also define $g={V}^{*}e_I$ the projection of $I$ into the eigenbasis. Then we can consider a decomposition of the $\operatorname{Var}(I)$ dynamics as 
\begin{align}
    \operatorname{Var}(I) &= e_I^{T}V\Sigma_yV^{*}e_I=g^{*}\Sigma_yg \\
    &= \sum_{i}|g_i|^2{\Sigma_{y}}_{ii}+\sum_{i\neq j}\bar{g_i}\Sigma_{ij}g_j
\end{align}
where the second line comes from splitting the contributions into diagonal and off-diagonal elements. So we can see that the instantaneous variance in $I$ comes from the diagonal and off-diagonal terms in $\Sigma_y$ weighted by the projection of each compartment in the eigenbasis. More intuitively, we can also consider the explicit time evolution of $\operatorname{Var}(I)$. 

Returning to Eq.~\ref{eq:modal_matrix} and substituting it into the expression for $\frac{\text{d}\operatorname{Var}(I)}{\text{d}t}$ yields 
\begin{align}
    \dfrac{\text{d}\operatorname{Var}(I)}{\text{d}t} &= \dot{g}^{*}\Sigma_yg+g^{*}\dot{\Sigma}_yg+g^{*}\Sigma_y\dot{g} \\
    &=g^{*}(\Lambda\Sigma_y+\Sigma_y\Lambda^*+B_y-M\Sigma_y-\Sigma_yM^*)g \nonumber\\
    &\quad\quad\quad + g^{*}M\Sigma_yg+g^{*}\Sigma_yM^{*}g \\
    &= g^{*}(\Lambda\Sigma_y+\Sigma_y\Lambda^*+B_y)g \\ 
    &= \sum_i{2\Re(\lambda_i)|g_i|^2{\Sigma_y}_{ii}} \nonumber\\
    &\quad\quad +\sum_{i\neq j}{(\lambda_i+\bar{\lambda_j})\bar{g}_i{\Sigma_y}_{ij}g_j}+g^{*}B_yg
\end{align}
where we use the fact that $\dot{g}=\dot{V}^{*}e_I=M^{*}g$ for simplification. Now, this somewhat unwieldy equation says that the evolution of $\operatorname{Var}(I)$ has a diagonal contribution scaled by the real part of the eigenvalues and the alignment of $I$ with the eigenvectors of $J$. The off-diagonal contributions are defined similarly but with additional scaling from the alignment/rotation of the eigenvectors and then the injected noise (as expected). Thus, we expect the variance to increase as the real part of the eigenvalues becomes positive like in standard critical slowing down theory, and if the eigenvectors are rotated in the $I$ direction.

Using these equations, we can now consider the theoretical evolution of $\operatorname{Var}(I)$ for the mean-field results summarised in Figs.~\ref{fig:SIRmeanhighR0} and~\ref{fig:SIRmeanlowR0}.

\begin{figure}[h!]
    \centering
    \includegraphics[width=\linewidth]{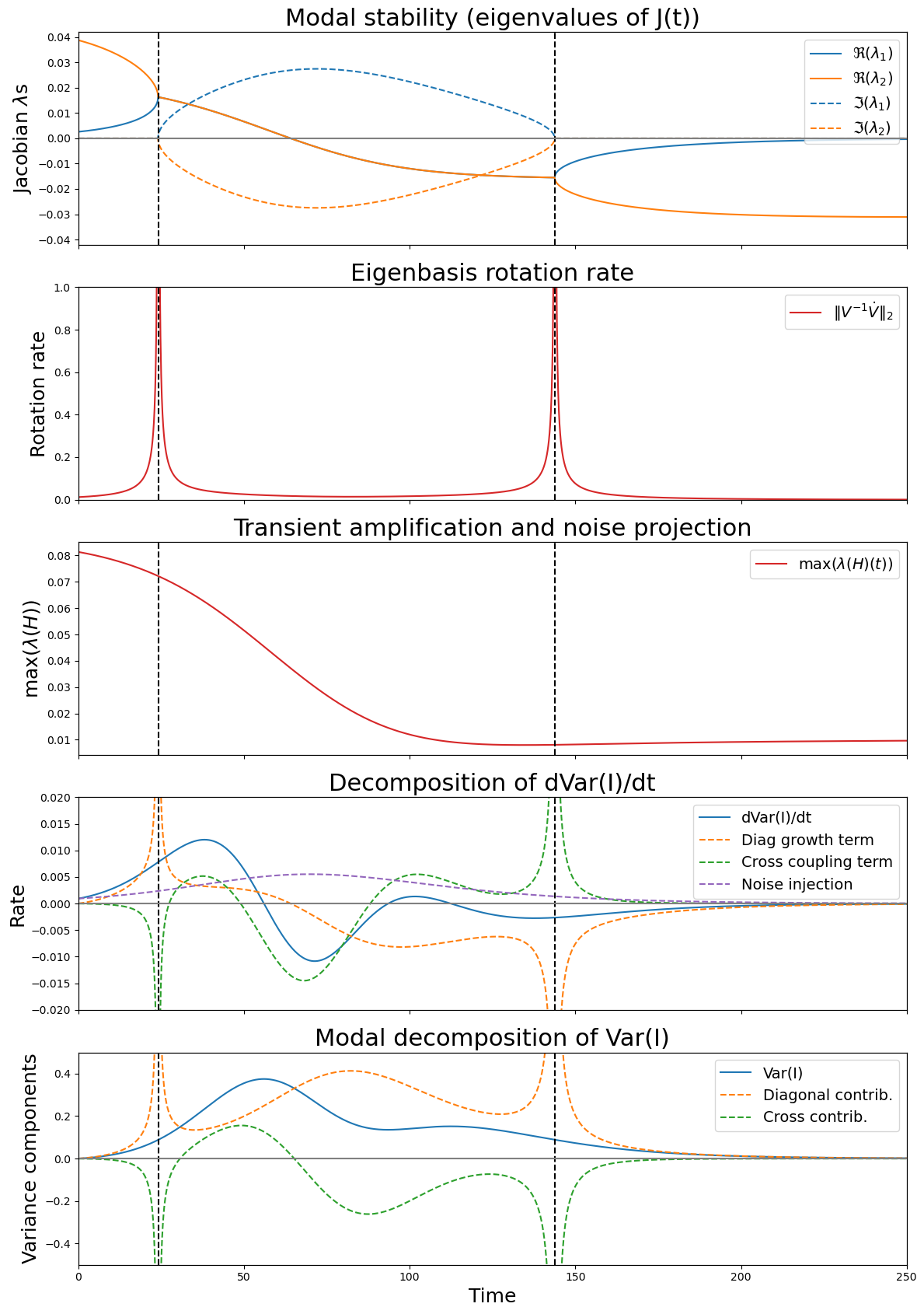}
    \caption{\label{fig:SIRfull} Theoretical eigenvalues of $J$, the rotation rate of the eigenbasis, the numerical abscissa of $H$ and the modal decomposition of $\operatorname{Var}(I)$ for the SIR model under the LNA with ($R_0=4,1/\mu=$ 80 years and initial conditions $(S=0.4,I=0.05,R=0.595)$. Vertical dashed lines show a change in eigenvalue complexity.}
\end{figure}

We observe two peaks in variance in Fig.~\ref{fig:SIRfull} (bottom plot, blue line). It is clear that the first increase is mainly caused by the initial diagonal contribution in the $J$ eigenbasis as well as a rotational contribution after the eigenvalues first become complex conjugates. This increase on approach to the change in complexity is expected from standard critical slowing down theory; we can see that the variance first peaks just before the real part of the complex conjugate Jacobian eigenvalues become negative. As for the second peak, this is explainable by time-dependent effects in the Jacobian eigenbasis. The fourth row in the plot shows how on approach to this second smaller peak, there is a sharp increase in the derivative due to rotational/cross-coupling terms of the eigenbasis. This indicates that, even though we would expect asymptotic decrease in fluctuations due to the negative real part of the Jacobian eigenvalues, geometric effects can cause signals in standard early warning statistics. The second transition in eigenvalue complexity, from complex conjugate to real, does not lead to a second increase in variance as the real part of the eigenvalues is still negative and rotational effects are largely dampened by diagonal/modal effects in the eigenbasis.

We also consider the $R_0<1$ theoretical results in Fig.~\ref{fig:SIRfulllowR0}.

\begin{figure}[h!]
    \centering
    \includegraphics[width=\linewidth]{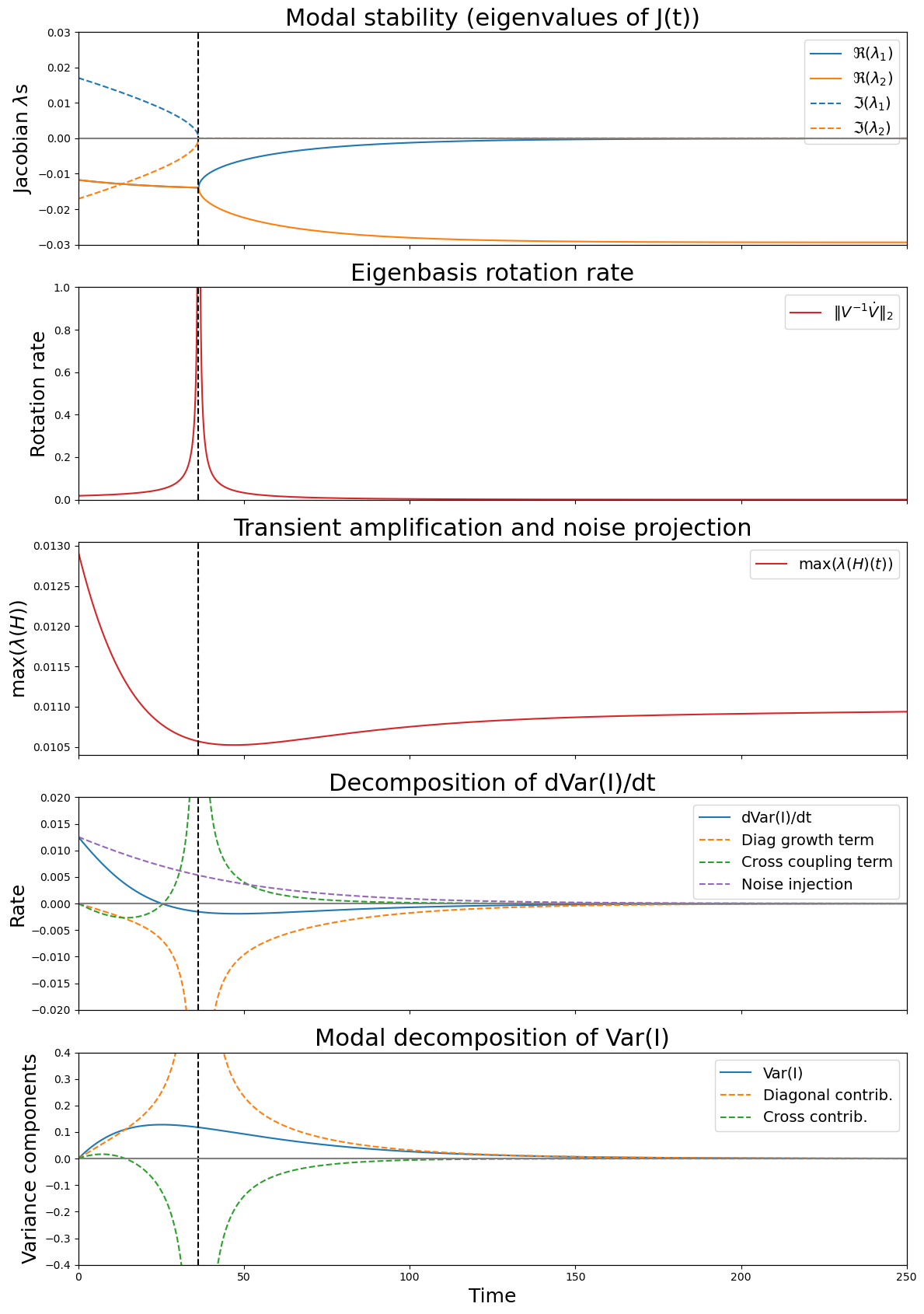}
    \caption{\label{fig:SIRfulllowR0} Theoretical eigenvalues of $J$, the rotation rate of the eigenbasis, the numerical abscissa of $H$ and the modal decomposition of $\operatorname{Var}(I)$ for the SIR model under the LNA with $R_0=0.840, 1/\mu=$ 80 years and initial conditions $(S=0.9,I=0.1,R=0)$. Vertical dashed lines show a change in eigenvalue complexity.}
\end{figure}

Unlike the $R_0=4$ case, the initial increase in variance is purely due to the noise injection (noting that the maximum eigenvalue of $H$ is positive such that transient growth in perturbations is possible).

Finally, we also consider the theoretical evolution of $\operatorname{Var}(I)$ when $R_0=4$ but with a much higher initial susceptible proportion and infectious proportion in Fig.~\ref{fig:SIRfullhighS0}, for the mean-field results shown in Fig.~\ref{fig:SIRmeanhighS0}.

\begin{figure}[h!]
    \centering
    \includegraphics[width=\linewidth]{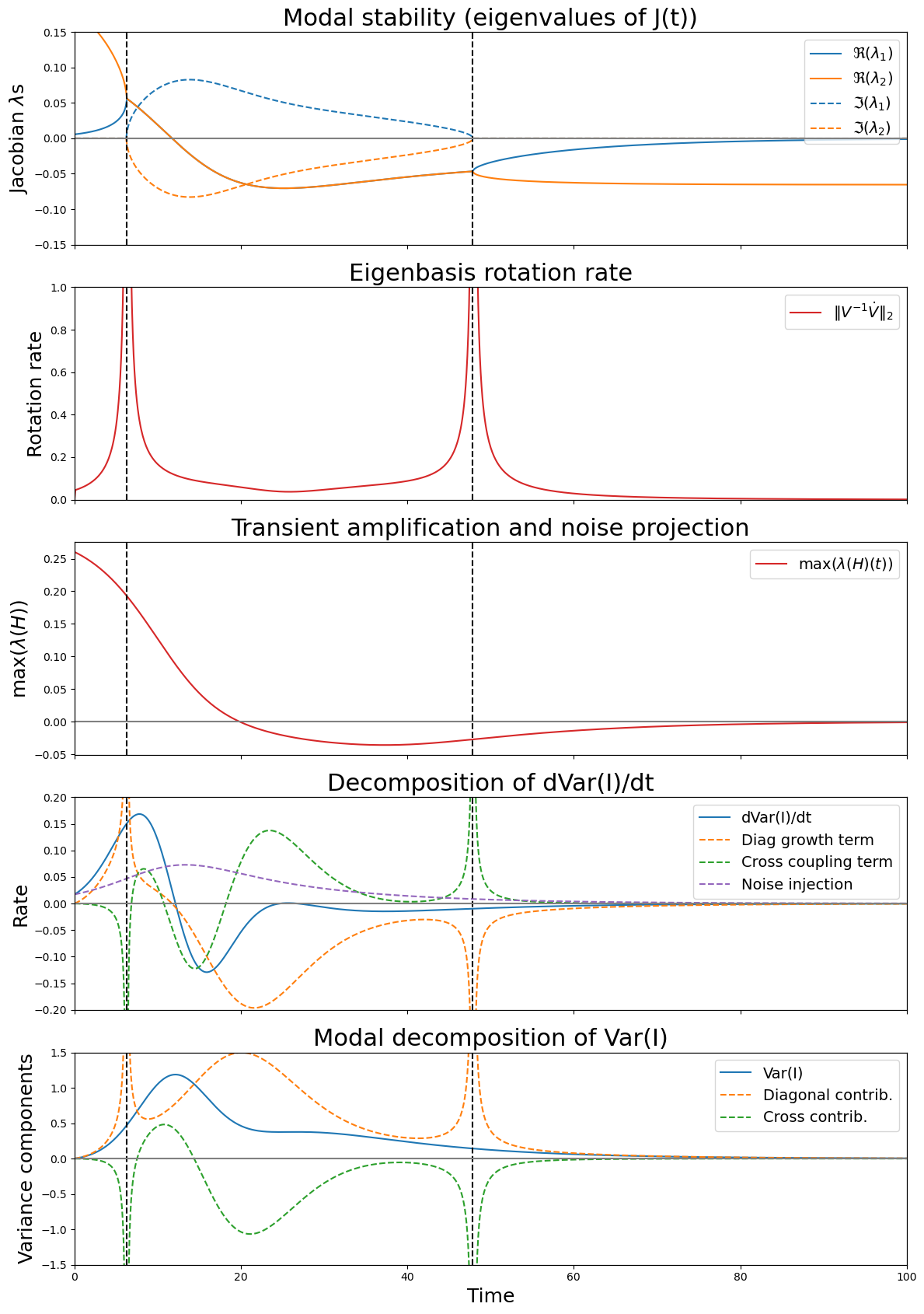}
    \caption{\label{fig:SIRfullhighS0} Theoretical eigenvalues of $J$, the rotation rate of the eigenbasis, the numerical abscissa of $H$ and the modal decomposition of $\operatorname{Var}(I)$ for the SIR model under the LNA with $R_0=4, 1/\mu=$ 80 years and initial conditions $(S=0.95,I=0.05,R=0)$. Vertical dashed lines show a change in eigenvalue complexity.}
\end{figure}

Comparing the two $R_0=4$ results with different initial conditions (Figs.~\ref{fig:SIRfull} and \ref{fig:SIRfullhighS0}), we still see qualitatively identical behaviour in the eigenvalues and variance but on a much faster time scale when we start with no recovered individuals. Further, the numerical abscissa of $H$ is negative from about $t=20$ days until the end of the simulation period. This may explain why the second peak in variance associated with the rotating Jacobian eigenbasis is slightly smaller than with a lower initial susceptible proportion. Nonetheless, the qualitative behaviour is still similar and matches what we expect from the intuition developed when analysing the $S-I$ phase plane.

\newpage
\section{Simulation results}
In order to test the limitations of the LNA when applied along a trajectory, we conducted a set of Gillespie simulations with parameter values and initial conditions as in the previous section. Each experiment used 10,000 simulations and we aggregated the jump process to daily values so that the variance could be taken between simulations at shared sampling intervals. 

Explicitly, for $R_0=4$, the initial conditions were set as $[S_{init},I_{init},R_{init}]=[0.4,0.005,0.595]N$ and for $R_0=0.840$ as $[0.9,0.1,0]N$. The initial covariance matrix for $S,I$ fluctuations is the zero matrix as all simulations start at the same initial state.

Results for $R_0=0.840$ are presented in Fig.~\ref{fig:simslowR0}. There is clear concordance between the theoretical and empirical variance across all population sizes tested with larger population sizes leading to greater agreement. 

\begin{figure*}
    \includegraphics[width=\linewidth]{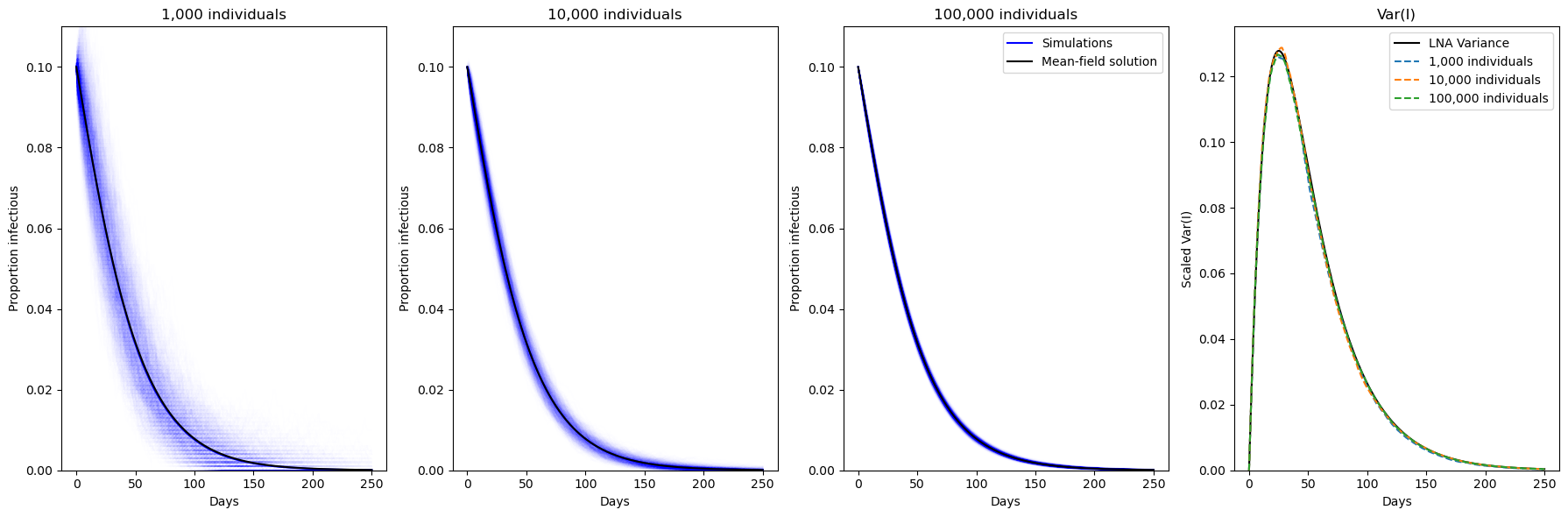}
    \caption{\label{fig:simslowR0} First three columns: proportion infectious from the results of $1,000$, $10,000$ and $100,000$ Gillespie simulations with simulation average and mean-field values overlayed for SIR model with $R_0=0.840,1/\mu=$ 80 years and initial conditions $(S=0.9,I=0.1,R=0)$. Last column: empirical and theoretical variance in the fluctuations of infectious individuals from the mean-field solution, empirical variance calculated between the simulation results at daily intervals.}
\end{figure*}

For $R_0=4$ results (Fig.~\ref{fig:simshighR0}) there is also good concordance between simulations and theory for higher population sizes. However, the $N=1,000$ simulations have a qualitatively different variance curve to what the theory predicts. The simulation average does not match the mean-field/ODE solution, likely due to the the increased impacts of stochastic effects with a lower population size. More simulations are seeing delayed take-off which is widening the variance curve and smoothing out the secondary peak expected from the theory. When we instead took the subset of 9,013 simulations that observed a `large outbreak' (which we defined as $\max{I(t)}>2I(0)$), the simulation average is closer to the mean-field solution and the first peak in the empirical variance is qualitatively similar to the theoretical prediction as seen in Fig.~\ref{fig:simshighR0quasi}. The smaller peak is due to increased correlation in the simulations, as previously analysed in \cite{Looker_identifying_2025}, but there is still a large difference in the variance curves with respect to the second theoretical peak.

\begin{figure*}
    \centering
    \includegraphics[width=\linewidth]{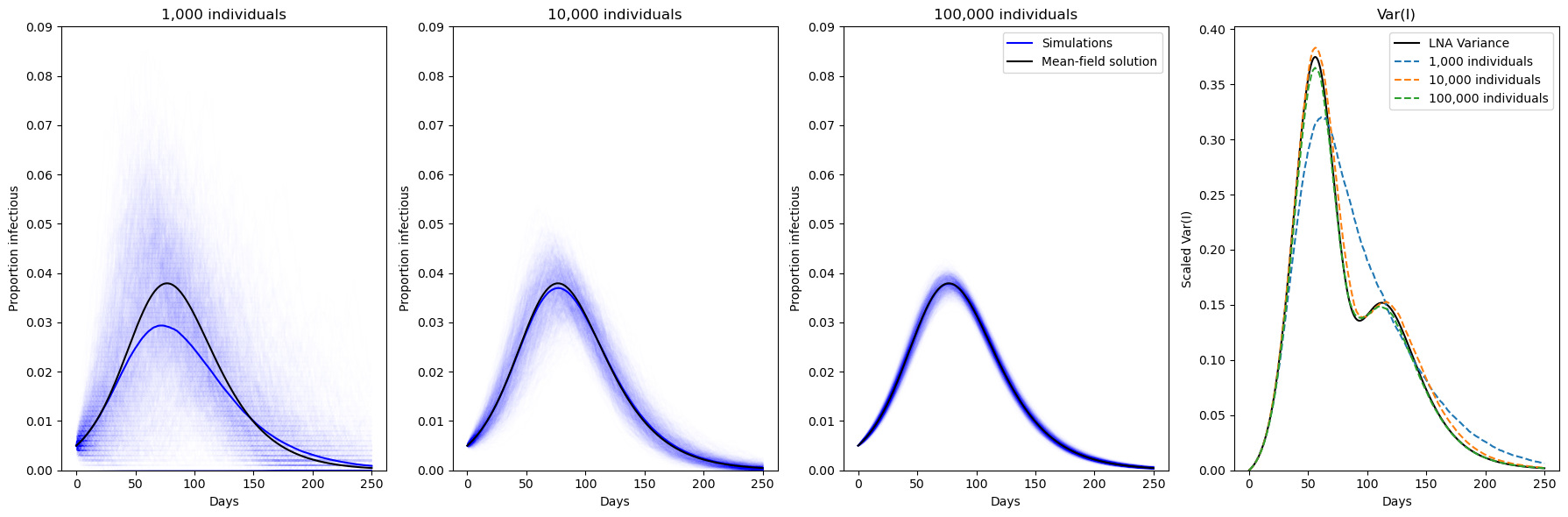}
    \caption{\label{fig:simshighR0} First three columns: proportion infectious from the results of 100,000 Gillespie simulations with population sizes of $N=1,000$, $10,000$ and $100,000$ with simulation average and mean-field values overlayed for SIR model with $R_0=4,1/\mu=$ 80 years and initial conditions $(S=0.4,I=0.05,R=0.595)$. Last column: empirical and theoretical variance in the fluctuations of infectious individuals from the mean-field solution, empirical variance calculated between the simulation results at daily intervals.}
\end{figure*}

We also tested how the size of the initial fluctuations from the mean-field solution may impact the resulting observed variance. The initial conditions for each compartment were sampled from independent truncated discretised Normal distributions, with expected total population size of $N=10,000$ and $\sigma^2\in[1000\times0.1^2,1000\times0.2^2,1000\times0.3^2]$ and the mean of each compartment equal to the mean-field initial conditions. The minimum number of initial susceptibles, infecteds and recovereds were set as $[1,1,0]$ for the $R_0=0.84$ simulations and $[1,10,1]$ for the $R_0=4$ simulations. We note that we would only expect the lower truncation to be reached in the $R_0=4$ simulations and only in the infectious compartment due to the small mean number of initial infectious. As the truncation and discretisation changed the initial variance and mean (of infectious individuals), when solving the ODE for the theoretical results we took the initial condition to be the empirical mean and variance at day 0. These results are summarised in Fig.~\ref{fig:simslownormR0} and Fig.~\ref{fig:simshighnormR0}, respectively.

\begin{figure*}
    \centering
    \includegraphics[width=\linewidth]{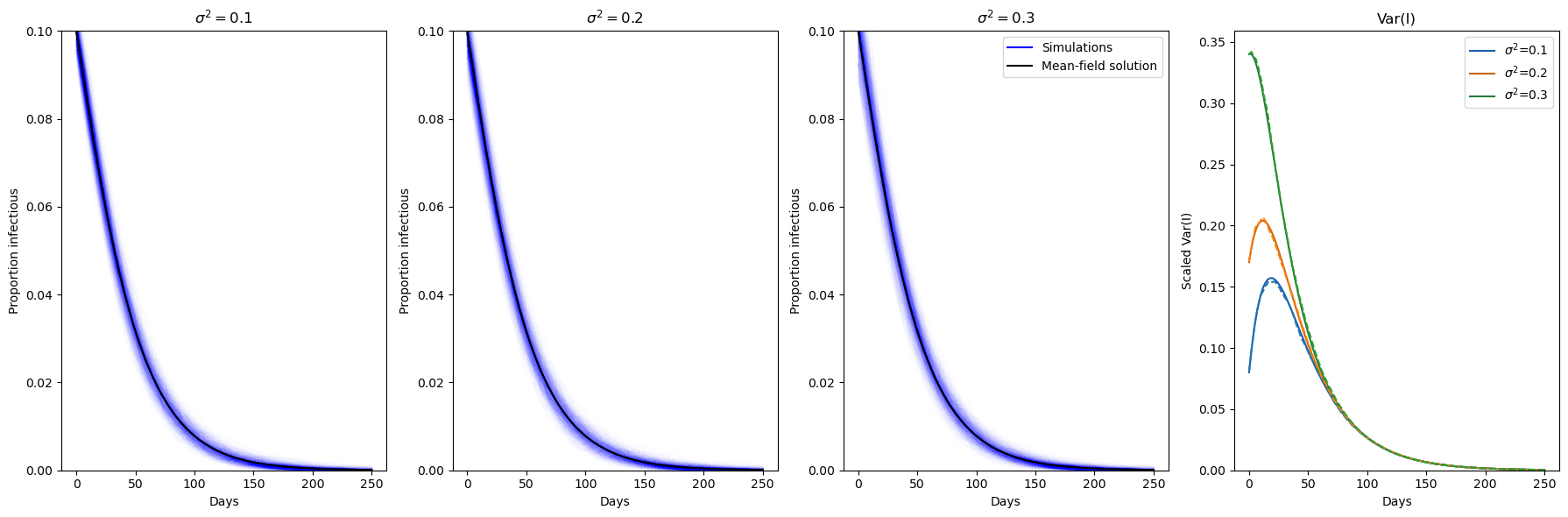}
    \caption{\label{fig:simslownormR0} First three columns: proportion infectious from the results of $100,000$ Gillespie simulations with simulation average and mean-field values overlayed for SIR model with $R_0=0.840,1/\mu=$ 80 years and initial conditions given by a truncated normal distribution with mean $(S=0.9,I=0.1,R=0)$ and variances given by $\sigma^2/N=0.1^2,0.2^2$ and $0.3^2$. Last column: empirical and theoretical variance in the fluctuations of infectious individuals from the mean-field solution, empirical variance calculated between the simulation results at daily intervals. Empirical variance shown in dashed lines and theoretical in solid lines.}
\end{figure*}

\begin{figure*}
    \centering
    \includegraphics[width=\linewidth]{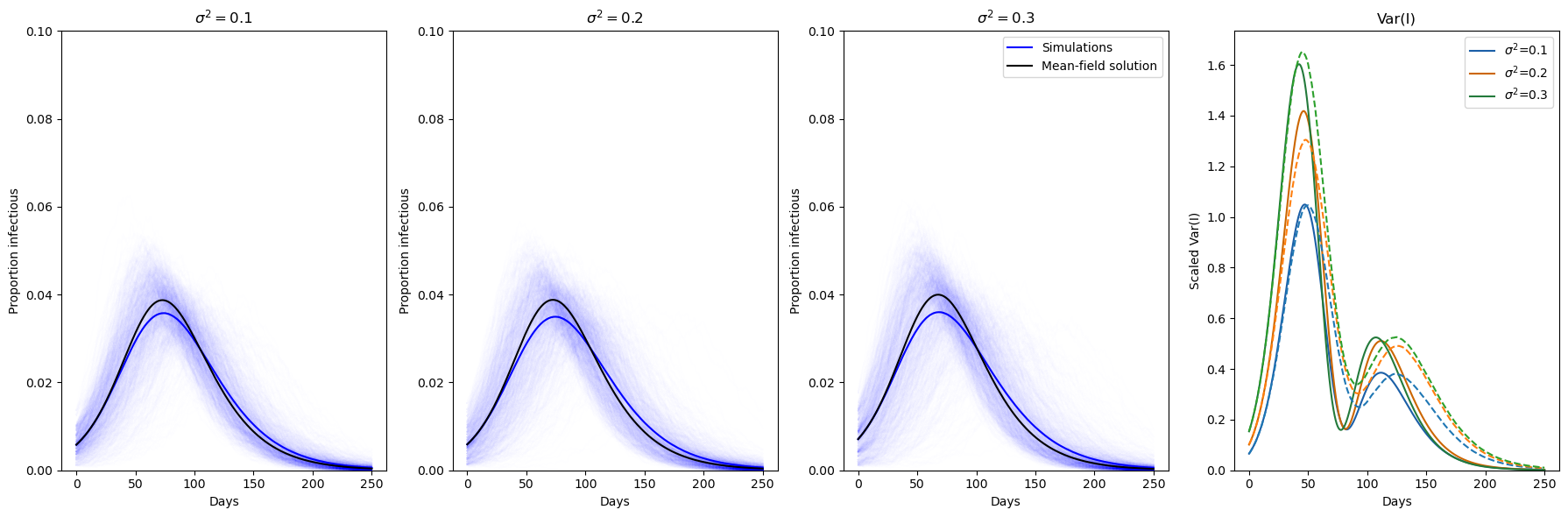}
    \caption{\label{fig:simshighnormR0} First three columns: proportion infectious from the results of $100,000$ Gillespie simulations with simulation average and mean-field values overlayed for SIR model with $R_0=4,1/\mu=$ 80 years and initial conditions given by a truncated normal distribution with mean $(S=0.4,I=0.05,R=0.595)$ and variances given by $\sigma^2/N=0.1^2,0.2^2$ and $0.3^2$. Last column: empirical and theoretical variance in the fluctuations of infectious individuals from the mean-field solution, empirical variance calculated between the simulation results at daily intervals. Empirical variance shown in dashed lines and theoretical in solid lines.}
\end{figure*}

The LNA clearly matches the empirical variance curve for the simulations with $R_0=0.84$ for all initial variances tested (Fig.~\ref{fig:simshighnormR0}). This may also be due to how close the simulations are to the mean-field solution regardless of the starting conditions. However, the simulations with variance in initial conditions (empirical initial variance) and $R_0=4$ show the same qualitative behaviour as predicted by the LNA but with later secondary peaks. This is likely because the LNA does not capture how skewed the truncated normal distribution is for the initial infected. The mean-field solution has a higher and slightly later peak than the average of the simulations, skewing the residuals distribution and thus the empirical variance. Nonetheless, the qualitative similarities between the predicted and empirical variance curves indicate that the LNA and the time-dependent variance can be used in an EWS context.

We also changed the initial sampling distribution to a Multinomial distribution with means equal to the mean-field initial conditions. Explicitly $[S_{init},I_{init},R_{init}]\sim\text{Multinomial}(N,p)$ where $p=(p_S,p_I,p_R)$ are the expected initial proportions in each compartment (the mean-field initial conditions). The initial covariance matrix for the fluctuations was then set using a Normal distribution to the Multinomial distribution such that $\Sigma_0=(\operatorname{diag}(p)-pp^T)/N$. The results are presented in Appendix~\ref{app:further_sims} in Fig.~\ref{fig:simslowrandR0} and Fig.~\ref{fig:simshighrandR0} for the $R_0=0.840$ and $R_0=4$ results respectively. We again see that the empirical variance matches the LNA predictions.

We note that in both sets of results with Multinomial initial conditions, and especially so for the $R_0=4$ simulations, it was required to use the full system representation with all three compartments when calculating the covariance matrix evolution. This is because starting with $R_{init} \neq 0$ leads to a possibly high correlation between the $I$ and $R$ compartments which then follows through to the evolution of $\operatorname{Var}(I)$.

\newpage
\section{Discussion}
We have investigated the evolution of the time-dependent covariance matrix of fluctuations in linearised dynamical systems. Our work suggests that the geometry of such systems, in terms of the spectra and pseudo-spectra, can be used to anticipate dynamical behaviour of interest away from dynamic equilibria.

We have shown that the covariance evolution of the fluctuations can be understood from the spectra and pseudo-spectra of the non-autonomous Jacobian. The intuitive relationship between the criticality of the spectra and `time-dependent' critical slowing down was derived, and further effects from the evolution of the eigenbasis itself were discussed. However, there are possible limitations when applying this theory to data. When estimating the distribution of residuals around a fixed point of a dynamical system, as in standard EWS analysis to anticipate a bifurcation, ergodic theory suggests that we can use the time-average of the data. This is no longer true in the non-autonomous/time-dependent case. Indeed ergodic theory has also failed with simulation- or data-driven EWS analysis around a steady-state \cite{lenton_early_2012,ODea_disentangling_2019,gama_dessavre_problem_2019}. Nonetheless, it may be possible to robustly estimate the fluctuation statistics if the mean-field solution is sufficiently well-behaved or if time-series from similar processes exist \cite{ODea_disentangling_2019,gama_dessavre_problem_2019,Looker_identifying_2025}.

Another clear limitation of the linearised theory is that fluctuations will only satisfy the linearised dynamics if they remain sufficiently close to the mean-field solution. In the $R_0>1$ parameter range, where analysis of the variance in the infectious population is arguably most useful, this was only true in the large population limit and was also dependent on the initial fluctuation distribution. We also note that population sizes above $N=100,000$ were not considered as fluctuations would track the mean-field solution closely (as seen in the $N=100,000$ results) and deterministic compartment modelling is likely sufficient to anticipate possible bifurcations or transient behaviour.

Apart from the $N=1,000$ with $R_0=4$ simulations, the initial prevalence was high enough that stochastic elimination did not skew the fluctuations distribution. There are papers in the literature that have analysed how lower prevalence and stochastic die-out lead to less normally distributed residuals and correlated results. Future work could consider the effects of using the quasi-stationary distribution of the continuous fluctuation process. We note that it is common in epidemic modelling to tie birth and death processes together in stochastic simulations (i.e. in the Gillespie algorithm, a death is always replaced by a new susceptible birth at the same time). This introduces correlation structures that might not be present in real-world systems; however, the effect is proportional to $\mu$ and so has minimal impact on models of most infection systems. Future work could also investigate how different modelling choices of transition functions and the noise-covariance matrix impact observed covariance trends.

This study can also help explain recent results in the epidemiological EWS literature where bifurcations in COVID-19 time series were anticipated using EWS analysis \cite{Looker_identifying_2025,dablander_overlapping_2022,proverbio_performance_2022}. The efficacy of EWS signals was found to vary with COVID-19 waves and whether authors were anticipating extinction or a re-infection wave. Our analysis indicates that this is likely due to the geometry of the epidemic system around $R_t=1$ obscuring possible $R_0=1$ bifurcations from intervention measures and new variants. This also explains why the overlapping time scales of $R_t<1$ as the epidemic goes through the peak can reduce the early warning of increasing infection from the introduction of new variants or relaxation of intervention measures. Future work should consider how transient changes in Jacobian spectral criticality may influence signal trends when trying to anticipate bifurcations in general dynamical systems. We expect the assumption of a steady-state mean-field would vary in applicability depending on the modelled infection due to differences in time-scales.

In this paper, we have investigated the linearised dynamics of fluctuations away from the mean-field solutions of non-normal dynamical systems. We derived relationships showing how the geometric properties of the system govern the evolution of the covariance matrix under the linear noise approximation. These relationships were then applied to the SIR infectious disease model to show that related early warning signals can anticipate epidemic waves and extinction even without the presence of a bifurcation, with these results extendable to all non-normal dynamical systems of interest.

\section*{Acknowledgements}
JL is supported by the Engineering and Physical Sciences Research Council through the Mathematics of Systems II Centre for Doctoral Training at the University of Warwick (reference EP/S022244/1). The funders played no role in the study design, data collection and analysis, decision to publish, or preparation of the manuscript.

\section*{Author contributions}
\textbf{Joshua Looker:} Conception, Data curation, Formal analysis, Investigation, Methodology, Software, Validation, Visualisation, Writing - Original Draft, Writing - Review \& Editing.
\noindent \textbf{Kat Rock:} Supervision, Visualisation, Writing - Review \& Editing.
\noindent \textbf{Louise Dyson:} Supervision, Visualisation, Writing - Review \& Editing.

\bibliography{bibliography.bib}

\clearpage
\newpage
\onecolumngrid

\appendix
\section{Deriving the fluctuation bounds}\label{app:gini}
In this appendix, we explicitly derive the bounds on the fluctuations given in the main text. 
Beginning with the lower limit and defining $g(t)=\Phi(t,t_0)x$ for unit vector $x$ then

\begin{align}
    g(t+h) =\Phi(t+h,t)g(t) &=(I+hJ(t)+o(h))g(t) \\
    \dfrac{||g(t+h)||-||g(t)||}{h} & \ge \inf_{||v||=1}{\dfrac{||I+hJv||-1}{h}}||g(t)||\\
    &\quad\quad\quad+o(1)||g(t)|| \\
    \implies \dfrac{\mathrm{d}}{\mathrm{d}t^-}\ln||g(t)|| &\ge \mu^*(J)
\end{align}

where the last line comes from taking $\liminf_{h\rightarrow 0^+}$. Similarly, we can derive that 

\begin{equation}
    \dfrac{\mathrm{d}}{\mathrm{d}t^+}\ln{||g(t)||} \le \mu(J)
\end{equation}

thus 

\begin{equation}
    \exp{\left(\int_{t_0}^{t}{\mu^*(J(s))ds} \right)} \le ||\Phi(t,t_0)|| \le \exp{\left(\int_{t_0}^{t}{\mu(J(s))ds} \right)}
\end{equation}

and for $||\cdot||=||\cdot||_2$ we have

\begin{equation}
    \exp{\left(\int_{t_0}^{t}{\lambda_{\min}\left(\dfrac{J+J^T}{2}\right)ds} \right)} \le ||\Phi(t,t_0)||_2
\end{equation}

and

\begin{equation}
    ||\Phi(t,t_0)||_2 \le \exp{\left(\int_{t_0}^{t}{\lambda_{\max}\left(\dfrac{J+J^T}{2}\right)ds} \right)}
\end{equation}
\newpage
\section{Further simulation results}\label{app:further_sims}
\begin{figure*}[h]
    \centering
    \includegraphics[width=\linewidth]{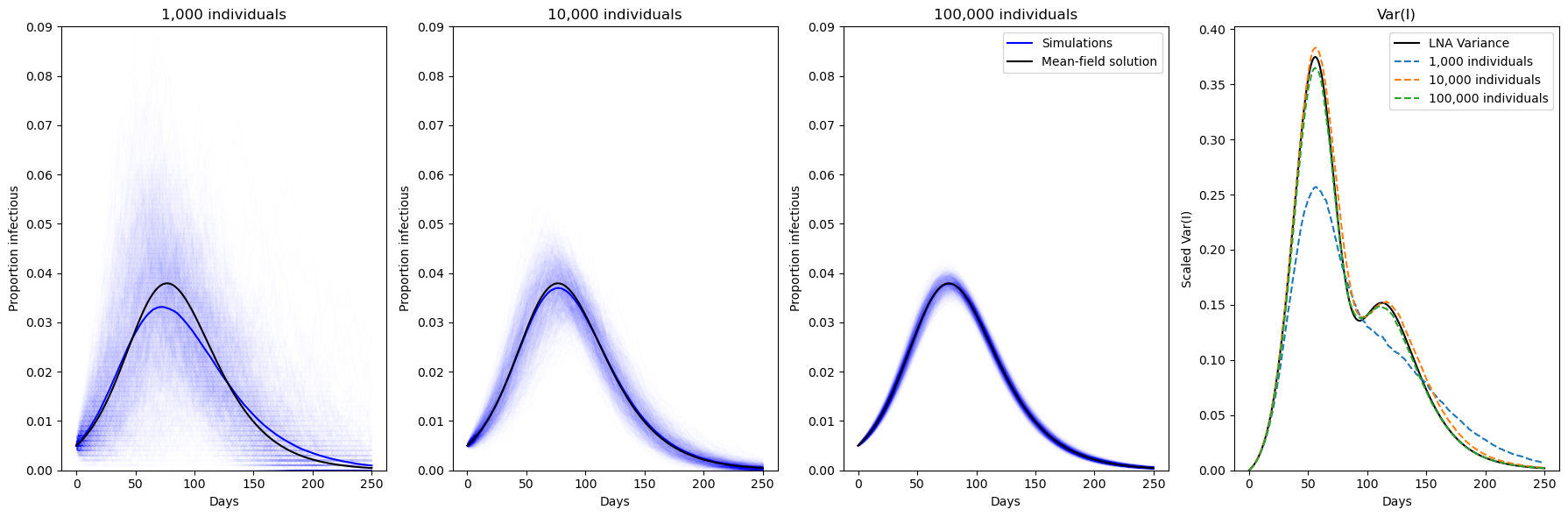}
    \caption{\label{fig:simshighR0quasi} First three columns: proportion infectious from the results of 100,000 Gillespie simulations with population sizes of $N=1,000$, $10,000$ and $100,000$ with simulation average and mean-field values overlayed for SIR model with pathogen dynamics in row 1 of Table~\ref{tab:params} and initial conditions $(S=0.4,I=0.05,R=0.595)$. Last column: empirical and theoretical variance in the fluctuations of infectious individuals from the mean-field solution, empirical variance calculated between the simulation results at daily intervals. The quasi-stationary distribution was used for the $N=1,000$ results corresponding to the subset of simulations that did not die-out before the end of the simulation period.}
\end{figure*}

\begin{figure*}[h]
    \centering
    \includegraphics[width=\linewidth]{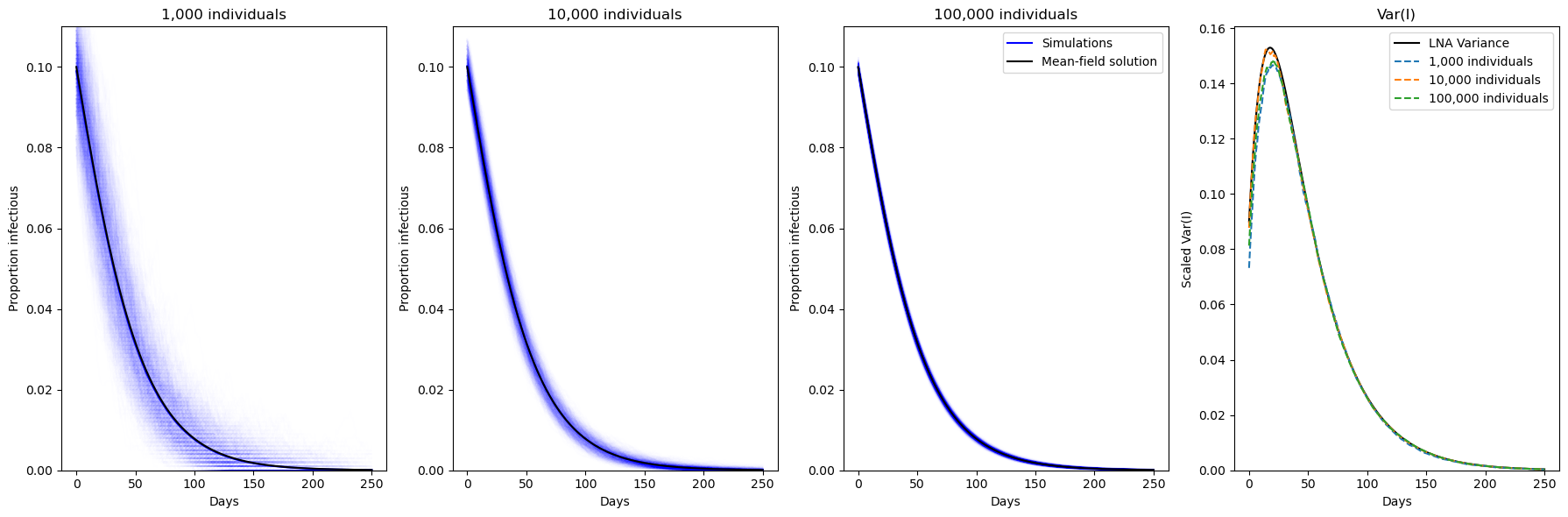}
    \caption{\label{fig:simslowrandR0} First three columns: proportion infectious from the results of $100,000$ Gillespie simulations with population sizes of $N=1,000$, $10,000$ and $100,000$ with simulation average and mean-field values overlayed for SIR model with pathogen dynamics in row 3 of Table~\ref{tab:params} and initial conditions given by a multinomial distribution with means $(S=0.9,I=0.1,R=0)$. Last column: empirical and theoretical variance in the fluctuations of infectious individuals from the mean-field solution, empirical variance calculated between the simulation results at daily intervals.}
\end{figure*}

\begin{figure*}[h]
    \centering
    \includegraphics[width=\linewidth]{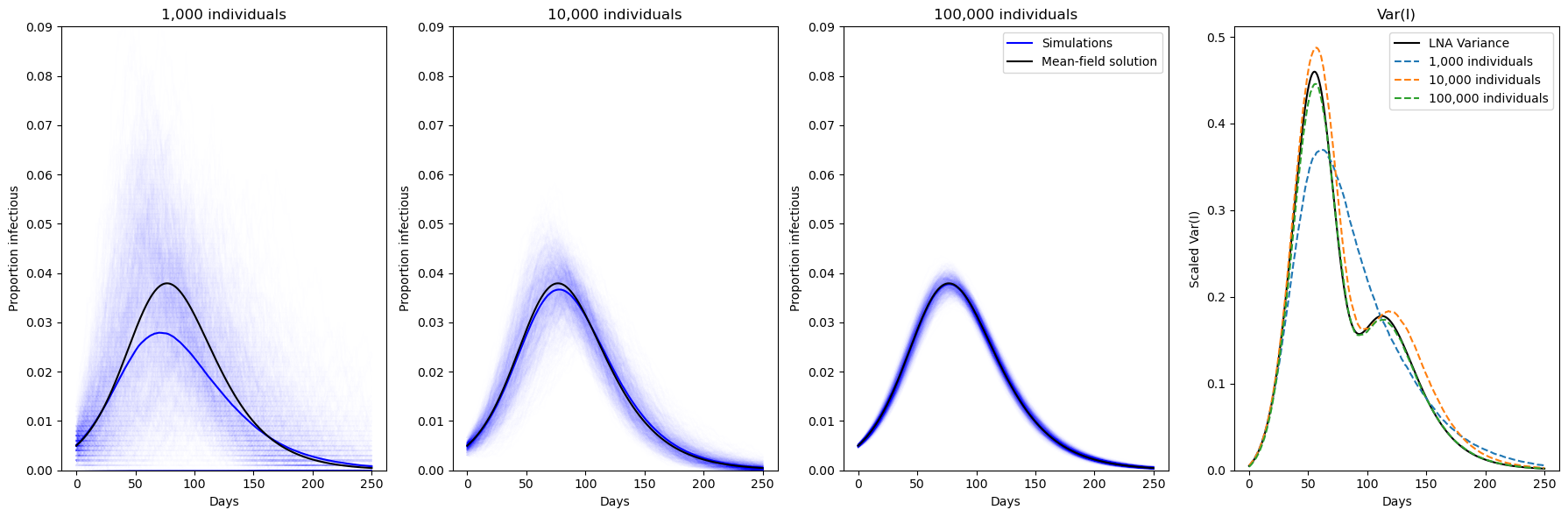}
    \caption{\label{fig:simshighrandR0} First three columns: proportion infectious from the results of $100,000$ Gillespie simulations with population sizes of $N=1,000$, $10,000$ and $100,000$ with simulation average and mean-field values overlayed for SIR model with pathogen dynamics in row 1 of Table~\ref{tab:params} and initial conditions given by a multinomial distribution with means $(S=0.4,I=0.05,R=0.595)$. Last column: empirical and theoretical variance in the fluctuations of infectious individuals from the mean-field solution, empirical variance calculated between the simulation results at daily intervals.}
\end{figure*}
\end{document}